\newtheorem{theorem}{Theorem}
\newtheorem{lemma}{Lemma}
\newtheorem{corollary}{Corollary}
\theoremstyle{definition}
\newtheorem{defi}{Definition}
\tikzstyle{startstop} = [rectangle, rounded corners, minimum width=3cm, minimum height=1cm, text centered, draw=black, fill=blue!20]
\tikzstyle{io} = [trapezium, trapezium left angle=70, trapezium right angle=110, minimum width=3cm, minimum height=1cm, text centered, draw=black, fill=blue!30]
\tikzstyle{process} = [rectangle, minimum width=3cm, minimum height=1cm, text centered, draw=black, fill=blue!20]
\tikzstyle{decision} = [diamond, minimum width=3cm, minimum height=1cm, text centered, draw=black, fill= green!30]
\tikzstyle{arrow} = [thick, ->, >=stealth]
\begin{document}
\title{Accelerating quantum imaginary-time evolution with random measurements}
\author{Ioannis Kolotouros}
    \email{I.Kolotouros@sms.ed.ac.uk}
    \affiliation{SandboxAQ, Palo Alto, USA}
    \affiliation{University of Edinburgh, School of Informatics, EH8 9AB Edinburgh, United Kingdom}
\author{David Joseph}
    \email{david.joseph@sandboxaq.com}
    \affiliation{SandboxAQ, Palo Alto, USA}
\author{Anand Kumar Narayanan}
    \email{anand.kumar@sandboxaq.com}
    \affiliation{SandboxAQ, Palo Alto, USA}
    
\date{\today}

\begin{abstract}
Quantum imaginary-time evolution (QITE) is a promising tool to prepare thermal or ground states of Hamiltonians, as convergence is guaranteed when the evolved state overlaps with the ground state. However, its implementation using a a hybrid quantum/classical approach, where the dynamics of the parameters of the quantum circuit are derived by McLachlan's variational principle is impractical as the number of parameters $m$ increases, since each step in the evolution takes $\Theta(m^2)$ state preparations to calculate the quantum Fisher information matrix (QFIM). In this work, we accelerate QITE by rapid estimation of the QFIM, while conserving the convergence guarantees to the extent possible. To this end, we prove that if a parameterized state is rotated by a 2-design and measured in the computational basis, then the QFIM can be inferred from partial derivative cross correlations of the probability outcomes.  One sample estimate costs only $\Theta(m)$ state preparations, leading to rapid QFIM estimation when a few samples suffice. The second family of estimators take greater liberties and replace  QFIMs with averaged classical Fisher information matrices (CFIMs). In an extreme special case optimized for rapid (over accurate) descent, just one CFIM sample is drawn. We justify the second estimator family by proving rapid descent. Guided by these results, we propose the \emph{random-measurement imaginary-time evolution} (RMITE) algorithm, which we showcase and test in several molecular systems, with the goal of preparing ground states.

\end{abstract}

\maketitle

\section{Introduction}

We are currently entering the \emph{early fault-tolerant era}, where quantum hardware improves at a fast pace, the (still small in number) qubits achieve error-correction \cite{da2024demonstration, bluvstein2024logical}, and indications of useful quantum results \cite{kim2023evidence} 
start to appear. Most of the industry and academic research is focused on finding practical applications where quantum computers can offer an advantage.

One approach is to run a quantum computer in conjunction with a classical supercomputer. The former can generate and measure non-classically-simulatable quantum states and the latter can classically process these measurements to inform further quantum computations. Examples of such frameworks include variational quantum algorithms \cite{cerezo2021variational}, variational imaginary-time evolution \cite{mcardle2019variational, gacon2024variational}, and classical shadows \cite{huang2020predicting, de2023classical, sack2022avoiding}.

A \emph{killer-app} for quantum computers is widely believed to be the ground state preparation of complex quantum mechanical systems \cite{chen2023quantum, dong2022ground, lin2020near, motlagh2024ground} such as molecules; a task in which quantum computers may offer an exponential quantum advantage, but this is still to be confirmed \cite{lee2023evaluating}. Popular candidates include quantum Gibbs samplers \cite{motta2020determining} where the physical system is coupled with a low-temperature thermal bath, or quantum imaginary-time evolution (QITE) \cite{motta2020determining, mcardle2019variational}, where under certain condition if the evolution is imaginary, the system will find itself onto its ground state.

In this work, we propose improvements to QITE, which still suffers exorbitant performance penalties. Here, the quantum state is evolved under the non-unitary operator $e^{-H\tau}$ where $\tau \in \mathbb{R}$. Such an evolution is guaranteed to converge to the ground state as long as the initial state is prepared to have a non-negligible overlap with the ground state. However, such an operator is \emph{non-physical}, meaning that in order to be realized and executed in quantum hardware, it must be transformed into a hardware-compatible quantum channel.

Real-time evolution can be realized in a quantum device using several different approaches, such as the Suzuki-trotter approximation \cite{suzuki1991general}, quantum signal processing \cite{low2017optimal, low2019hamiltonian}, or variational approaches \cite{yuan2019theory, benedetti2021hardware} to name a few. On the other hand, imaginary-time evolution requires different methods in order to realize it efficiently (and practically) on a quantum computer \cite{motta2020determining, mcardle2019variational, gacon2024variational, Fitzek2024optimizing, gacon2023stochastic}. For example, in \cite{motta2020determining}, the authors argued that one could trotterize the non-unitary evolution and, for each trotter step, find the unitary operator that is closest (with respect to 2-norm) to the former. 
In \cite{mcardle2019variational}, the authors argued that one could employ a parameterized quantum circuit and find the parameter dynamics that follow the imaginary-time evolution; a method called variational quantum imaginary-time evolution (VarQITE). Additionally, in \cite{huo2023error}, the authors used quantum Monte Carlo to simulate imaginary-time evolution and \cite{cao2022quantum} employed reinforcement learning techniques to mitigate the error induced by Trotterization and local approximation errors.

While each approach has its merits, they also come with bottlenecks. In the variational case \cite{mcardle2019variational}, in each step, the user has to calculate the quantum Fisher information matrix (QFIM) \cite{mcardle2019variational, meyer2021fisher}. This is extremely costly for applying QITE in current quantum devices. As noted in \cite{gacon2024variational}, the implementation of 200 VarQITE iterations (i.e. calculations of Eq. \eqref{eq:diff_eq_imaginary}) in state-of-the-art superconducting quantum processors using a $\approx 600$ parameter circuit would require times close to a year. The authors were able to reduce the time to a week by replacing the calculation of the QFIM with a dual problem which requires the solution of a fidelity-based optimization. 

In the following, we utilize a powerful tool that has been exploited in the quantum computing literature, and that is \emph{random measurements} \cite{elben2023randomized}. We show that the QFIM can be approximated by measuring the parameterized quantum state at random bases and propose two estimators. Previous works have utilized random measurements to calculate quantities such as the Rényi entropy \cite{elben2019statistical, elben2018renyi},  to identify mixed-state entanglement \cite{elben2020mixed} and to estimate the overlap of two quantum states \cite{elben2019statistical, elben2020cross}. On top of that, they have been utilized to calculate certain observables of a quantum state \cite{huang2020predicting}, as quantum states typically carry more information than needed to calculate the observables.\\

\emph{Our Contributions:}
\begin{itemize}
    \item We prove how the QFIM can be inferred using only $O(Km)$ quantum states, where $m$ is the number of parameters in the parameterized quantum circuit and $K$ is the number of random measurements which in practice is much smaller than $m$. We then show that the random measurement can be performed by first rotating the state by a unitary that is sampled from a 2-design and then measuring in the computational basis.

    \item We propose a second estimator to the QFIM that is constructed as the average classical Fisher information matrix when the parameterized state is measured at random (by applying a random unitary operator and then measuring in the computational basis).

    \item We propose an imaginary-time evolution algorithm based on the previous estimators that requires significantly fewer quantum state preparations than VarQITE.

    \item We test our algorithm on the task of preparing ground states of different molecular systems and show that our method brings hybrid approaches a step closer to practical implementations.
    
\end{itemize}

\emph{Structure.} In Sec. \ref{sec:preliminaries}, we give the essential background on the relevant information matrices and on imaginary-time evolution. In Sec. \ref{sec:approximatin_qfisher_random}, we present our main results, which are two estimators of the quantum Fisher information matrix. The former is based on the fidelity estimation using random measurements proposed in \cite{elben2019statistical} while the latter is constructed as an average classical Fisher information matrix. In the same section, we propose an algorithm that approximates imaginary-time evolution using a hybrid quantum/classical setting and is considerably faster than VarQITE. In Sec. \ref{sec:quantum_chemistry}, we compare our algorithm with VarQITE on the task of preparing ground states of certain molecular systems and showcase its advantage. We conclude in Sec. \ref{sec:discussion} with a general discussion of our results and future work.

\section{Preliminaries}
\label{sec:preliminaries}

In this section, we first describe standard distance measures in the space of parameterized probability distributions/quantum states and derive metrics that characterize the underlying local geometry. Then, we set the stage by outlining imaginary time evolution. 

\subsection{Quantum Fisher Information Matrix}
In this paper, we will consider pure quantum states $\ket{\phi(\boldsymbol{\theta})}$ that are parameterized by a real $m$-dimensional vector $\boldsymbol{\theta}\in \mathbb{R}^m$ through a smooth map $\boldsymbol{\theta} \mapsto \ket{\phi(\boldsymbol{\theta})}$. Distance between two such states $\ket{\phi(\boldsymbol{\theta})}, \ket{\phi(\boldsymbol{\theta'})}$ is typically measured through \emph{infidelity}, defined as
\begin{equation}
d_F\big(\ket{\phi(\boldsymbol{\theta})}, \ket{\phi(\boldsymbol{\theta'})}\big) := 1 - |\bra{\phi(\boldsymbol{\theta})}\ket{\phi(\boldsymbol{\theta'})}|^2
\label{eq:infidelity}
\end{equation}
If we allow the parameters $\boldsymbol{\theta},\boldsymbol{\theta'}$ to differ only by a small vector $\boldsymbol{\epsilon}$ (with $\norm{\boldsymbol{\epsilon}}$ being small), then the Taylor expansion of Eq. \eqref{eq:infidelity} truncated to neglect third-order terms equals
\begin{equation}
d_F(\ket{\phi(\boldsymbol{\theta})},\ket{\phi(\boldsymbol{\theta}+ \boldsymbol{\epsilon})}) = \frac{1}{4} \boldsymbol{\epsilon}^T [\mathcal{F}_Q(\boldsymbol{\theta})] \boldsymbol{\epsilon} = \frac{1}{4} \norm{\boldsymbol{\epsilon}}_{\mathcal{F}_Q}^2,
\end{equation}
where $\mathcal{F}_Q(\boldsymbol{\theta})$ is the quantum Fisher information matrix (QFIM) at $\boldsymbol{\theta}$ defined as the Hessian of the infidelity:
\begin{equation}
    \mathcal{F}_Q(\boldsymbol{\theta}) := 2\grad^2 d_F(\ket{\phi(\boldsymbol{\theta})},\ket{\phi(\boldsymbol{\theta}+ \boldsymbol{\epsilon})}) \Big{|}_{\boldsymbol{\epsilon}=0}
\end{equation}
The matrix elements can be expressed as (see \cite{stokes2020quantum, meyer2021fisher})
\begin{equation}
\begin{gathered}
[\mathcal{F}_Q(\boldsymbol{\theta})]_{ij} = 4 \; \mathrm{Re}\Bigg[\frac{ \partial \bra{\phi(\boldsymbol{\theta})}}{\partial \theta_i} \frac{\partial \ket{\phi(\boldsymbol{\theta})}}{\partial \theta_j} \\
   - \frac{ \partial \bra{\phi(\boldsymbol{\theta})}}{\partial \theta_i}\ket{\phi(\boldsymbol{\theta})}\bra{\phi(\boldsymbol{\theta})}\frac{ \partial \ket{\phi(\boldsymbol{\theta})}}{\partial \theta_j} \Bigg].
\label{eq:qfim_elements}
\end{gathered}
\end{equation}

\subsection{Classical Fisher Information Matrix}
Just as we defined distances in the space of quantum states, we can define distances in the space of probability distributions. However, in our quantum setting, the probability distributions depend on the choice of measurement basis $\mathcal{M}$. In this paper, we will focus on measurements that can be performed by first applying a global unitary $U$ on the quantum state and then measuring in the computational basis. The probability $p_{\boldsymbol{s}}^U$ of each outcome $\boldsymbol{s}\in \{0, 1\}^n$ is given as:
\begin{equation}
    p_{\boldsymbol{s}}^U = \tr ( U \rho U^\dagger \Pi_{\boldsymbol{s}})
\end{equation}
where $\Pi_{\boldsymbol{s}} = \ketbra{\boldsymbol{s}}$ is the projection operator on the $\boldsymbol{s}$-th eigenspace.

Consider the probability distributions $\boldsymbol{p}_{U}(\boldsymbol{\theta})$ and $ \boldsymbol{p}_{U}(\boldsymbol{\theta}+ \boldsymbol{\epsilon})$ resulting after rotating the states $\ket{\phi(\boldsymbol{\theta})}$ and $\ket{\phi(\boldsymbol{\theta+\boldsymbol{\epsilon}})}$ by a unitary $U$ and then measuring in the computational basis (see Figure \ref{fig:general_measurement_circuit}). Let the distance measure be the (\emph{Kullback-Leibler}) KL-divergence (or else the relative entropy) defined as
\begin{equation}
    \mathrm{KL}(\boldsymbol{u}||\boldsymbol{v}) := \sum_{j=1}^K u_j \log \frac{u_j}{v_j}
\end{equation}
for any $\boldsymbol{u}, \boldsymbol{v}\in \Delta^{K-1}$ with $\Delta^{K-1}$ being the probability simplex of dimension $K-1$. If the shift vector $\boldsymbol{\epsilon}$ is small, then the KL-divergence can be expressed as (if we again neglect third-order terms)
\begin{equation}
 \mathrm{KL}(\boldsymbol{p}_U(\boldsymbol{\theta})|| \boldsymbol{p}_U(\boldsymbol{\theta+\boldsymbol{\epsilon}})) = \frac{1}{2}\boldsymbol{\epsilon}^T [\mathcal{F}_C^U(\boldsymbol{\theta})] \boldsymbol{\epsilon} = \frac{1}{2}\norm{\boldsymbol{\epsilon}}_{\mathcal{F}_C^{U}}^2,
\end{equation}
where $\mathcal{F}_C^U$ is the classical Fisher information matrix (CFIM) \cite{meyer2021fisher, kolotouros2024random} whose elements are defined as
\begin{equation}
    [\mathcal{F}_C^U(\boldsymbol{\theta})]_{ij} := \sum_{\boldsymbol{s}} \frac{1}{p_{\boldsymbol{s}}^U(\boldsymbol{\theta})}\frac{\partial p_{\boldsymbol{s}}^U(\boldsymbol{\theta})}{\partial \theta_i} \frac{\partial p_{\boldsymbol{s}}^U(\boldsymbol{\theta})}{\partial \theta_j}.
\end{equation}
where $p_{\boldsymbol{s}}^U(\boldsymbol{\theta}) = \Tr [U\rho(\boldsymbol{\theta})U^\dagger \Pi_{\boldsymbol{s}}]$ and $\rho(\boldsymbol{\theta}) = \ketbra{\phi(\boldsymbol{\theta})}$. The partial derivatives $\frac{\partial p_{\boldsymbol{s}}^U(\boldsymbol{\theta})}{\partial \theta_i}$ can be estimated using the parameter-shift rules \cite{mari2020estimating, schuld2019evaluating, wierichs2022general} as
\begin{equation}
    \frac{\partial p_{\boldsymbol{s}}^U(\boldsymbol{\theta}) }{\partial \theta_j} = \frac{1}{2}\Big( p_{\boldsymbol{s}}^U\Big(\boldsymbol{\theta} + \frac{\pi}{2}\hat{\boldsymbol{e}}_j\Big) - p_{\boldsymbol{s}}^U\Big( \boldsymbol{\theta} - \frac{\pi}{2}\hat{\boldsymbol{e}}_j\Big)\Big).
\end{equation}

\begin{figure}
    \centering
    \includegraphics[scale=0.5]{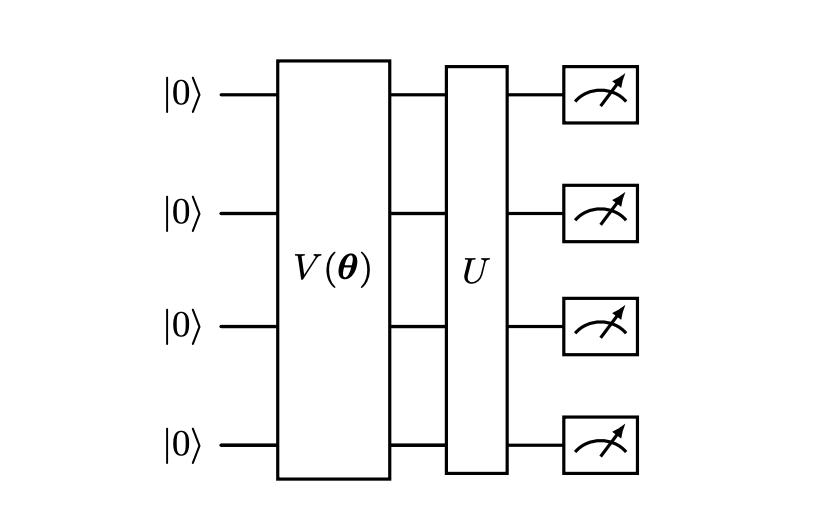}
    \caption{General random-measurement framework rotating the parameterized quantum state by a unitary $U \sim \nu$ where $\nu \subseteq U(2^n)$ and then measuring in the computational basis.}
\label{fig:general_measurement_circuit}
\end{figure}

\subsection{Quantum imaginary-time evolution}
\label{subsec:imaginary_time_evolution}

In quantum mechanics, when a system is initialized in the quantum state $\ket{\psi(0)}$ (at time $t=0$) and its dynamics are described by a time-independent Hamiltonian $H$, it will evolve under the unitary $e^{-iHt}$, i.e.:
\begin{equation}
    \ket{\psi(t)} = e^{-iHt}\ket{\psi(0)}
 \end{equation}
Such an evolution can be simulated in a gate-based quantum computer by ``trotterizing" the unitary evolution $e^{-iHt}$ into short time intervals $\delta t$. 

If we allow the time to take imaginary values ($\tau \equiv it$), then the operator $e^{-H\tau}$ is no longer unitary, and the evolution is called \emph{imaginary-time evolution}. As a first step, we will derive the mathematical equation that governs the imaginary-time evolution. Consider the imaginary-time evolved state $\ket{\psi(\tau)}$:
\begin{equation}
    \ket{\psi(\tau)} = A(\tau)e^{-H\tau}\ket{\psi(0)}
\end{equation}
where:
\begin{equation}
    A(\tau) = \Bigg(\frac{1}{\sqrt{\bra{\psi(0)}e^{-2H\tau}\ket{\psi(0)}}}\Bigg)
\end{equation}
is a normalization factor that ensures that the imaginary-evolved quantum state is normalized, i.e. $\bra{\psi(\tau)}\ket{\psi(\tau)}=1$. The evolution under the imaginary-time evolution is governed by the \emph{Wick-Schr\"{o}dinger} equation. To see this, we take the time derivative:
\begin{equation*}
\begin{aligned}
    \frac{\partial \ket{\psi(\tau)}}{\partial \tau} = \frac{\partial}{\partial \tau} \Big(A(\tau)e^{-H\tau}\ket{\psi(0)}\Big) =\\
    \frac{\partial A(\tau)}{\partial \tau}e^{-H\tau}\ket{\psi(0)} + A(\tau)\frac{\partial e^{-H\tau}}{\partial \tau} \ket{\psi(0)}
\end{aligned}
\end{equation*}
Computing the derivative in the first term, we obtain:
\begin{equation}
    \frac{\partial A(\tau)}{\partial \tau} = \frac{\partial}{\partial \tau}\Bigg(\frac{1}{\sqrt{\bra{\psi(0)}e^{-2H\tau}\ket{\psi(0)}}}\Bigg) = A(\tau)E_\tau
\end{equation}
where $E_\tau = \bra{\psi(\tau)}H\ket{\psi(\tau)}$.
Thus, putting everything back together we obtain the \emph{Wick-Schr\"{o}dinger equation}:
\begin{equation}
\begin{aligned}
    \frac{\partial \ket{\psi(\tau)}}{\partial \tau} = %A(\tau)E_T e^{-H\tau}\ket{\psi(0)} &+ A(\tau)He^{-H\tau} \ket{\psi(0)} \\
    (E_\tau - H)\ket{\psi(\tau)}
\end{aligned}
\end{equation}

As we discussed, imaginary-time evolution is a very interesting tool that allows the preparation of thermal states \cite{motta2020determining, PhysRevA.108.022612,turro2023quantum} or ground states \cite{mcardle2019variational, gomes2021adaptive}. The necessary condition is that the initial state is prepared with a non-zero overlap with the ground state of the Hamiltonian of interest. To see this, consider a Hamiltonian $H$ and an initial state $\ket{\psi(0)}$ that has a non-zero overlap with the ground state $\ket{\psi_0}$. We can write the initial state in the energy eigenbasis as:
\begin{equation}
    \ket{\psi(0)} = a_0 \ket{\psi_0} + \sum_{j\neq 0}a_j \ket{\psi_j} 
\end{equation}
where $\ket{\psi_0}$ is the ground state. Evolving the state according to the imaginary time-evolution will result in the quantum state:
\begin{equation}
\begin{gathered}
    \ket{\psi(\tau)} = A(\tau) \Big[a_0 e^{-H\tau}\ket{\psi_0} + \sum_{j\neq 0}a_j e^{-H\tau}\ket{\psi_j}\Big] \\
    = A(\tau) \Big[a_0 e^{-E_0\tau}\ket{\psi_0} + \sum_{j\neq 0}a_j e^{-E_j \tau}\ket{\psi_j}\Big]
\end{gathered}
\end{equation}
As a result, in the limit of $\tau \rightarrow \infty$ the system reaches the ground state.

In our case, we are equipped with a small-scale (and perhaps noisy) quantum computer and we aim to approximate the exact imaginary-time evolution described by the states $\ket{\psi(\tau)}$ by a family of parameterized state $\ket{\phi(\boldsymbol{\theta}(\tau)}$ \cite{mcardle2019variational} which approximate the former states as much as possible. In other words, we aim to find the parameter dynamics $\boldsymbol{\theta}(\tau)$ so that the parameterized state approximates the imaginary-time evolution. Starting from McLachlan's variational principle \cite{mclachlan1964variational}:
\begin{equation}
    \delta \norm{(d/d\tau + H - E_\tau)\ket{\phi(\boldsymbol{\theta}(\tau))}}_2 = 0 
\end{equation}
and introducing a time-dependent global phase in the calculation \cite{yuan2019theory}, we find (see \cite{mcardle2019variational, yuan2019theory} for details) that the parameters must satisfy:
\begin{equation}
\mathcal{F}_Q(\boldsymbol{\theta}(\tau)) \dot{\boldsymbol{\theta}} = -2\grad_{\boldsymbol{\theta}} E_\tau(\boldsymbol{\theta}(\tau))
\label{eq:diff_eq_imaginary}
\end{equation}
where $\mathcal{F}_Q$ is the quantum Fisher information matrix defined in Eq. \eqref{eq:qfim_elements}. One of the major drawbacks is that the evaluation of Eq. \eqref{eq:diff_eq_imaginary} at a certain point requires the preparation of $\Theta(m^2)$ quantum states, scaling quadratically with the number of parameters. On top of that, in order to calculate each element of the QFIM, one has to prepare circuits of size twice the depth of those needed to prepare the quantum state $\ket{\phi(\boldsymbol{\theta})}$ \cite{mari2020estimating}, or otherwise to employ a Hadamard-overlap test with twice the qubit resources \cite{haug2021capacity}.

\section{Approximating quantum Fisher information matrix with random measurements}
\label{sec:approximatin_qfisher_random}

As previously discussed in Subsection \ref{subsec:imaginary_time_evolution}, there is a need for a fast calculation of the quantum Fisher information matrix (QFIM) or an approximation to it. In this section, we outline our results on the approximation of the QFIM using random measurements \cite{elben2023randomized} and defer the proofs to the appendix. Throughout the rest of the manuscript we will denote $\mathcal{F}_Q$ the quantum Fisher information matrix and $\tilde{\mathcal{F}}_Q$ any approximation to it. Our first result is presented in Theorem \ref{th:main_theorem}. 

el
\begin{theorem}
    For every parameterization  $\boldsymbol{\theta} \mapsto \ket{\phi(\boldsymbol{\theta})}$, the matrix elements of the quantum Fisher information matrix can be inferred as 
    \begin{equation*}
    [\mathcal{F}_Q(\boldsymbol{\theta})]_{ij} = 2(2^n + 1)  \sum_{\boldsymbol{s}} \mathbb{E}_{U\sim \mu_H}\Bigg[\frac{\partial p_{\boldsymbol{s}}^U(\boldsymbol{\theta})}{\partial \theta_i}\frac{\partial p_{\boldsymbol{s}}^U(\boldsymbol{\theta})}{\partial \theta_j}\Bigg],
    \label{eq:quantum_from_random_measurements}
    \end{equation*}
where $\mathbb{E}_{U\sim \mu_H}[\cdot]$ is the ensemble average over random unitary $U$ drawn from the Haar distribution $\mu_H$ and
$p_{\boldsymbol{s}}^U(\boldsymbol{\theta}) := \bra{\phi(\boldsymbol{\theta})}U^{\dagger}\Pi_{\boldsymbol{s}}U\ket{\phi(\boldsymbol{\theta})}$ is the probability of the outcome $\boldsymbol{s}$ when measuring $U\ket{\phi(\boldsymbol{\theta})}$ with respect to the computational basis projectors $\{\Pi_{\boldsymbol{s}}\}$. 
\label{th:main_theorem}
\end{theorem}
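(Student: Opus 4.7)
The plan is to reduce the identity to a standard Haar twirl of a two-copy operator, then to relate the resulting trace $\operatorname{Tr}(\partial_i\rho\,\partial_j\rho)$ to the QFIM using purity.

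First I would rewrite the outcome probability as $p_{\boldsymbol{s}}^U(\boldsymbol{\theta}) = \operatorname{Tr}(U\rho(\boldsymbol{\theta})U^{\dagger}\Pi_{\boldsymbol{s}})$ and differentiate through $U$, which is $\boldsymbol{\theta}$-independent, to get $\partial_i p_{\boldsymbol{s}}^U = \operatorname{Tr}(U (\partial_i\rho) U^{\dagger}\Pi_{\boldsymbol{s}})$. Using the standard identity $\sum_{\boldsymbol{s}}\langle \boldsymbol{s}|A|\boldsymbol{s}\rangle\langle \boldsymbol{s}|B|\boldsymbol{s}\rangle = \operatorname{Tr}\bigl[(A\otimes B)\,M\bigr]$ with the diagonal projector $M := \sum_{\boldsymbol{s}}\Pi_{\boldsymbol{s}}\otimes\Pi_{\boldsymbol{s}}$, I would repackage the sum of products of derivatives as
\begin{equation*}
\sum_{\boldsymbol{s}}\partial_i p_{\boldsymbol{s}}^U\,\partial_j p_{\boldsymbol{s}}^U = \operatorname{Tr}\!\Bigl[(U\otimes U)(\partial_i\rho\otimes\partial_j\rho)(U^{\dagger}\otimes U^{\dagger})\,M\Bigr].
\end{equation*}

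Next I would take the Haar expectation and apply the Schur–Weyl twirling formula
\begin{equation*}
\mathbb{E}_{U\sim\mu_H}\!\bigl[(U\otimes U) X (U^{\dagger}\otimes U^{\dagger})\bigr] = \alpha(X)\,\mathbb{I} + \beta(X)\,S,
\end{equation*}
where $S$ is the SWAP and the coefficients depend linearly on $\operatorname{Tr}(X)$ and $\operatorname{Tr}(XS)$. For $X=\partial_i\rho\otimes\partial_j\rho$, the crucial simplification is that $\operatorname{Tr}(\partial_i\rho)=\partial_i\operatorname{Tr}(\rho)=0$ because $\rho$ is a state, so the $\operatorname{Tr}(X)$ contribution vanishes and only $\operatorname{Tr}(XS)=\operatorname{Tr}(\partial_i\rho\,\partial_j\rho)$ survives. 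Computing $\operatorname{Tr}(M)=\operatorname{Tr}(SM)=2^n$ then gives, after the arithmetic collapses,
\begin{equation*}
\sum_{\boldsymbol{s}}\mathbb{E}_{U\sim\mu_H}\bigl[\partial_i p_{\boldsymbol{s}}^U\,\partial_j p_{\boldsymbol{s}}^U\bigr] = \frac{\operatorname{Tr}(\partial_i\rho\,\partial_j\rho)}{2^n+1}.
\end{equation*}

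Finally I would close the loop by proving the pure-state identity $[\mathcal{F}_Q(\boldsymbol{\theta})]_{ij} = 2\operatorname{Tr}(\partial_i\rho\,\partial_j\rho)$. Since $\rho = \ketbra{\phi(\boldsymbol{\theta})}$, expanding $\partial_i\rho = \ket{\partial_i\phi}\bra{\phi}+\ket{\phi}\bra{\partial_i\phi}$, multiplying, and taking the trace gives four terms; using the normalization identity $\langle\phi|\partial_i\phi\rangle = -\overline{\langle\phi|\partial_i\phi\rangle}$ lets me combine them into $2\operatorname{Re}[\langle\partial_i\phi|\partial_j\phi\rangle - \langle\partial_i\phi|\phi\rangle\langle\phi|\partial_j\phi\rangle]$, which is exactly $\tfrac12[\mathcal{F}_Q]_{ij}$ by equation \eqref{eq:qfim_elements}. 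Combining with the previous display yields the factor $2(2^n+1)$ and finishes the proof.

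The main obstacle is bookkeeping in the twirl: one has to keep the tensor-product ordering and the SWAP trace consistent. Everything else — the vanishing of $\operatorname{Tr}(\partial_i\rho)$, the trivial trace of the diagonal projector $M$, and the pure-state algebra — is short. The same argument in fact goes through verbatim if $\mu_H$ is replaced by any unitary 2-design, which is the practically useful version used later in the paper.
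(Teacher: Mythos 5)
Your proof is correct, and its computational core --- writing $\sum_{\boldsymbol{s}}\partial_i p_{\boldsymbol{s}}^U\,\partial_j p_{\boldsymbol{s}}^U$ as a two-copy trace against $M=\sum_{\boldsymbol{s}}\Pi_{\boldsymbol{s}}\otimes\Pi_{\boldsymbol{s}}$, applying the Schur--Weyl twirl of Eq.~\eqref{eq:haar_moments}, and using $\Tr(\partial_i\rho)=0$ to reduce everything to $\Tr(\partial_i\rho\,\partial_j\rho)/(2^n+1)$ --- is exactly the computation the paper performs in Eq.~\eqref{eq:calculation_inner_product_derivatives_probabilities}. Where you genuinely diverge is in how that trace is identified with the QFIM. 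The paper takes a longer route: it starts from the fidelity-from-random-measurements theorem of Elben et al.\ with its Hamming-distance-weighted double sum over outcome pairs, Taylor-expands the infidelity to express $[\mathcal{F}_Q]_{ij}$ through $\sum_{\boldsymbol{s}}\mathbb{E}\big[p_{\boldsymbol{s}}^U\,\partial^2 p_{\boldsymbol{s}}^U/\partial\theta_i\partial\theta_j\big]$, and then stitches the two expressions together via the purity identity $\frac{\partial^2\rho^2}{\partial\theta_i\partial\theta_j}=2\frac{\partial\rho}{\partial\theta_i}\frac{\partial\rho}{\partial\theta_j}+2\rho\frac{\partial^2\rho}{\partial\theta_i\partial\theta_j}$ together with $\Tr[\partial^2\rho^2/\partial\theta_i\partial\theta_j]=0$. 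You instead close the loop with the elementary pure-state identity $[\mathcal{F}_Q]_{ij}=2\Tr(\partial_i\rho\,\partial_j\rho)$, obtained by expanding $\partial_i\rho=\ket{\partial_i\phi}\bra{\phi}+\ket{\phi}\bra{\partial_i\phi}$ and matching against Eq.~\eqref{eq:qfim_elements}; this is shorter, avoids the detour through the fidelity-estimation theorem and the second-derivative bookkeeping, and makes the extension to arbitrary $2$-designs (Corollary~\ref{cor:2design}) immediate. What it gives up is only the paper's framing of the result as a perturbative consequence of the randomized fidelity estimator. One small caution: after the twirl the identity component does not vanish outright when $\Tr(X)=0$, since its coefficient also contains $-d^{-1}\Tr(\mathbb{S}X)$; it is only after tracing against $M$, using $\Tr(M)=\Tr(\mathbb{S}M)=2^n$, that the two permutation contributions combine to the clean factor $1/(2^n+1)$ --- your ``the arithmetic collapses'' covers this, but it is the one place where the bookkeeping genuinely has to be written out.
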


\begin{proof}
    For a detailed proof, see Appendix \ref{appendix:random_measurements}.
\end{proof}

Each sample requires at most $2m$ quantum state preparations in total. To sample, it suffices to compute the partial derivatives of the probability outcomes in Eq. \eqref{eq:quantum_from_random_measurements}, and those can be easily calculated using parameter-shift rules (see Preliminaries section). The immediate result is that, in practice, this estimator requires significantly fewer quantum states to approximate the QFIM since it can be written as a product of first-order derivatives.

In general, generating Haar random unitaries on a quantum computer is a computationally exhaustive task since most unitary operators require a number of gates that scale exponentially to the number of qubits \cite{mele2024introduction}. On the other hand, $k$-designs are distributions that match the Haar moments up to the $k$-th order (see Definition \ref{defi:k-design}). The advantage is that $k$-designs can be generated efficiently.

\begin{defi}
    \emph{(Unitary $k$-design) %Let $\nu$ be a probability distribution defined over a set of unitaries $S\subseteq U(d)$. The distribution $\nu$ is a unitary $k$-design if and only if}
    A probability distribution $\nu$ supported over a set of unitaries $S\subseteq U(d)$ is defined to be a unitary $k$-design if and only if}
    \begin{equation}
       \mathbb{E}_{V\sim \nu}[V^{\otimes k}O V^{\dagger \otimes k}] =  \mathbb{E}_{U\sim \mu_H} [U^{\otimes k}O U^{\dagger \otimes k}]   
    \end{equation}
    \emph{for all $O\in \mathcal{L}((\mathbb{C}^d)^{\otimes k})$}.
    \label{defi:k-design}
\end{defi}
We next prove a corollary of Theorem \ref{th:main_theorem}, recasting Haar random unitaries with $2$-designs.

\begin{corollary}\label{cor:2design}
    For $U$ drawn from a 2-design $\nu$, the elements of the quantum Fisher information matrix satisfy
      \begin{equation}
        [\mathcal{F}_Q(\boldsymbol{\theta})]_{ij} = 2(2^n + 1)    \sum_{\boldsymbol{s}} \mathbb{E}_{U\sim \nu}\Bigg[\frac{\partial p_{\boldsymbol{s}}^U(\boldsymbol{\theta})}{\partial \theta_i}\frac{\partial p_{\boldsymbol{s}}^U(\boldsymbol{\theta})}{\partial \theta_j}\Bigg]
        \label{eq:quantum_from_random_measurements_2_design}
    \end{equation} 
    where 
$p_{\boldsymbol{s}}^U(\boldsymbol{\theta}) := \bra{\phi(\boldsymbol{\theta})}U^{\dagger}\Pi_{\boldsymbol{s}}U\ket{\phi(\boldsymbol{\theta})}$ is the probability of the outcome $\boldsymbol{s}$ when measuring $U\ket{\phi(\boldsymbol{\theta})}$ with respect to the computational basis projectors $\{\Pi_{\boldsymbol{s}}\}_{\boldsymbol{s}}$. 
\end{corollary}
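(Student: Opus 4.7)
The plan is to reduce Corollary \ref{cor:2design} to Theorem \ref{th:main_theorem} by showing that the integrand on the right-hand side of \eqref{eq:quantum_from_random_measurements} depends on $U$ only through the tensor products $U^{\otimes 2}$ and $U^{\dagger\otimes 2}$. Since Definition \ref{defi:k-design} guarantees that any expectation of this type agrees for $U\sim\mu_H$ and for $V\sim\nu$ whenever $\nu$ is a 2-design, the corollary will follow at once.

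The key step is to recast the product of derivatives in a form that exposes the bidegree $(2,2)$ dependence on $U$. First I would rewrite each probability as $p_{\boldsymbol{s}}^U(\boldsymbol{\theta}) = \tr[U\rho(\boldsymbol{\theta})U^\dagger \Pi_{\boldsymbol{s}}]$ with $\rho(\boldsymbol{\theta}) = \ketbra{\phi(\boldsymbol{\theta})}$. Since $U$ is independent of $\boldsymbol{\theta}$, the derivatives act only on $\rho$, so $\partial_{\theta_i}p_{\boldsymbol{s}}^U = \tr[U(\partial_{\theta_i}\rho)U^\dagger \Pi_{\boldsymbol{s}}]$. Using the elementary identity $\tr(A)\tr(B) = \tr(A\otimes B)$, I would combine the two traces into a single trace on the doubled Hilbert space:
\begin{equation*}
\frac{\partial p_{\boldsymbol{s}}^U}{\partial \theta_i}\frac{\partial p_{\boldsymbol{s}}^U}{\partial \theta_j} = \tr\Bigl[U^{\otimes 2}\bigl(\partial_{\theta_i}\rho \otimes \partial_{\theta_j}\rho\bigr)U^{\dagger\otimes 2}\bigl(\Pi_{\boldsymbol{s}}\otimes \Pi_{\boldsymbol{s}}\bigr)\Bigr].
\end{equation*}
This places the integrand in the canonical form $\tr[U^{\otimes 2} O \, U^{\dagger\otimes 2} M]$, with the random unitaries confined to the sandwich $U^{\otimes 2}(\cdot)U^{\dagger\otimes 2}$ and everything else independent of $U$.

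Next I would invoke Definition \ref{defi:k-design} with $k=2$ and operator $O = \partial_{\theta_i}\rho \otimes \partial_{\theta_j}\rho$ to conclude that $\mathbb{E}_{V\sim\nu}[V^{\otimes 2}OV^{\dagger\otimes 2}] = \mathbb{E}_{U\sim\mu_H}[U^{\otimes 2}OU^{\dagger\otimes 2}]$. Because the outer trace against $\Pi_{\boldsymbol{s}}\otimes\Pi_{\boldsymbol{s}}$ and the finite sum over $\boldsymbol{s}$ are linear operations, they commute with the expectation, so I can replace $\mathbb{E}_{U\sim\mu_H}$ by $\mathbb{E}_{V\sim\nu}$ throughout \eqref{eq:quantum_from_random_measurements} without changing its value. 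Substituting this back into Theorem \ref{th:main_theorem} produces \eqref{eq:quantum_from_random_measurements_2_design}.

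I do not anticipate a serious obstacle here: the only conceptual move is recognizing that a product of two outcome probabilities, each linear in $U$ and in $U^\dagger$, lifts cleanly to an operator of polynomial bidegree $(2,2)$, which is exactly what a 2-design is built to reproduce. The one subtlety worth verifying is that the $U$'s and $U^\dagger$'s in the product assemble as $U^{\otimes 2}$ and $U^{\dagger\otimes 2}$ respectively, rather than in some arbitrarily interleaved order; this is handled precisely by the tensor-product reformulation above, and is the reason the argument does not require any higher-design property.
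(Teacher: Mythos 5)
Your proposal is correct and follows essentially the same route as the paper: both rewrite the product of derivative probabilities as a single trace of a bidegree-$(2,2)$ expression $\Tr[U^{\otimes 2} O\, U^{\dagger\otimes 2} M]$ and then invoke Definition \ref{defi:k-design} with $k=2$; the only immaterial difference is that the paper applies the twirl to $\Pi_{\boldsymbol{s}}^{\otimes 2}$ whereas you apply it to $\partial_{\theta_i}\rho \otimes \partial_{\theta_j}\rho$, which is equivalent since the $2$-design property holds for all operators $O$.
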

\begin{proof}
The expectation $\mathbb{E}_{U\sim \mu_H}\Bigg[\frac{\partial p_{\boldsymbol{s}}^U(\boldsymbol{\theta})}{\partial \theta_i}\frac{\partial p_{\boldsymbol{s}}^U(\boldsymbol{\theta})}{\partial \theta_j}\Bigg]$  
in the statement of Theorem \ref{th:main_theorem} expands as
\begin{equation}
\begin{gathered}
\Tr \Bigg[\mathbb{E}_{U\sim \mu_H}[U^{\otimes 2} \Pi_{\boldsymbol{s}}^{\otimes 2} U^{\dagger \otimes 2}] \frac{\partial \rho(\boldsymbol{\theta})}{\partial \theta_i} \otimes \frac{\partial \rho(\boldsymbol{\theta})}{\partial \theta_j}\Bigg],
    \end{gathered}
    \end{equation}
 which, by specializing Definition \ref{defi:k-design} to $2$-designs equals 
 \begin{equation*}
 \mathbb{E}_{U\sim \nu}\Bigg[\frac{\partial p_{\boldsymbol{s}}^U(\boldsymbol{\theta})}{\partial \theta_i}\frac{\partial p_{\boldsymbol{s}}^U(\boldsymbol{\theta})}{\partial \theta_j}\Bigg],    
 \end{equation*}
 thereby proving the corollary.
\end{proof}

\paragraph*{The $2$-design estimator:} Corollary \ref{cor:2design} suggests natural inference procedures for estimating the QFIM by sampling unitaries that come from a $k$-design with $k\geq 2$ (since a $k$-design is also a 2-design if $k\geq 2$). We next derive a simple estimator in this motif and call it the \emph{the 2-design estimator}. To efficiently sample the unitary, we draw from the ensemble of the $n$-qubit Clifford group $Cl(n)$, which forms a 3-design. Elements from the $n$-qubit Clifford group can be generated by a circuit with at most $\mathcal{O}(n^2/\log n)$ elementary gates \cite{aaronson2004improved}, showcasing that our 2-design estimator can be implemented in a NISQ/early fault-tolerant quantum computer. As before, computing the full gradient vector (and hence, drawing one sample of the estimate in Corollary \ref{cor:2design}) only takes $O(m)$ quantum state preparations. To obtain the estimate in Corollary \ref{cor:2design}, we repeat this sampling procedure $K$ times, where $K$ is a hyper-parameter considered as a design choice. The state preparation cost $\Theta(Km)$ of the 2-design estimator significantly improves on the $\Theta(m^2)$ required by previous known algorithms for QFIM when $K$ is small (see Appendix \ref{appendix:quantum_resources} for more details on the actual quantum resources needed). 

For most parameterizations $\boldsymbol{\theta}\mapsto \ket{\phi(\boldsymbol{\theta})}$, the measurement probability $p_{\boldsymbol{s}}^U(\boldsymbol{\theta})$ distribution is likely concentrated around its expectation. Therefore, a small $K$ likely suffices for most applications. As a curiosity to determine the accuracy limits of the 2-design estimator in the worst case, it would be interesting to construct parametrizations requiring a large $K$, perhaps by planting pathological high dimensional singularities. The hyperparameter can also be adaptively tuned, keeping it small at the early stages for rapid descent and increasing it closer towards the end of the evolution to converge to a more accurate ground state. We leave these interesting questions open for future research. We do demonstrate empirically in small examples that, in practice, $K$ is much smaller than $m$ (see Section \ref{sec:quantum_chemistry}). In essence, $K$ offers a tradeoff between rapid and accurate descent of QITE incorporating the 2-design estimator. In most cases, a small choice of $K$ is sufficiently accurate while facilitating rapid descent.  

Next, we provide a second estimator to the QFIM, which we name \emph{the average classical Fisher information matrix estimator}. As such, we provide a second definition, which is the average (over an ensemble $\nu \subseteq U(2^n)$) classical Fisher information matrix.

\begin{defi}
\emph{(Average classical Fisher information matrix). Consider an ensemble of unitary operators $\nu \subseteq U(2^n)$ from which we uniformly sample. We can define the average (over the unitary ensemble $\nu$) classical Fisher information matrix as:}
\begin{equation}
        \mathbb{E}_{U \sim \nu} [\mathcal{F}_C^U]
\label{eq:average_classical_fisher_matrix}
\end{equation}
\end{defi}

The idea is that one can get information about the underlying quantum states by measuring on a specific basis. If this procedure is performed repeatedly, one can get an accurate picture of the geometry of the parameterized quantum states. We were able to make an important observation that is true for all parameterized quantum states that were investigated in this paper.\\

\noindent \textbf{Conjecture.} \emph{If the unitaries are drawn from the Haar-distribution $\nu = \mu_H$, then the average Fisher defined in Eq. \eqref{eq:average_classical_fisher_matrix} approximates the quantum Fisher information matrix, i.e.}
\begin{equation}
    \mathbb{E}_{U \sim \mu_H} [\mathcal{F}_C^U(\boldsymbol{\theta})]  = \frac{1}{2}\mathcal{F}_Q(\boldsymbol{\theta})
\label{eq:cfisher_haar}
\end{equation}
\emph{for any $\boldsymbol{\theta}$.}\\

The proof of the previous conjecture is a very challenging task. The reason is that the unitaries that appear in the Haar-integral on the left-hand side of Eq. \eqref{eq:cfisher_haar} enter in a non-linear fashion. As such, certain results from random matrix theory cannot be directly applied in this scenario. We discuss that thoroughly in Appendix \ref{appendix:random_measurements}. The above conjecture indicates that by choosing the appropriate unitary ensemble to sample from (in this case, the Haar distribution), one can get an accurate description of the underlying geometry in the space of parameterized quantum states. However, as we later show (see Lemma \ref{lemma:descent_direction}), replacing the quantum Fisher information matrix in Eq. \eqref{eq:diff_eq_imaginary} by the average classical Fisher information matrix for any subset $\nu \subseteq U(2^n)$ will always result in a descent direction for a sufficiently small time step.

Each one of the two estimators that we propose (in Eq. \eqref{eq:quantum_from_random_measurements} and Eq. \eqref{eq:cfisher_haar}) have different advantages compared to the other. As we verified, the random CFIM estimator in Eq. \eqref{eq:cfisher_haar} outperforms the former in Eq. \eqref{eq:quantum_from_random_measurements} in terms of speed of convergence. This means that with fewer sampled unitaries (and measurements) we can approximate the quantum Fisher information to great accuracy. An illustrative example is given in Figure \ref{fig:distance from QFIM}. In this figure, we visualize how the quantum Fisher information matrix can be approximated using random measurements with either of the two estimators in Eq. \eqref{eq:quantum_from_random_measurements} and Eq. \eqref{eq:cfisher_haar} for a random 8-qubit parameterized quantum state at a random configuration $\boldsymbol{\theta}$. In this experiment, all quantities were calculated exactly by using a statevector simulator. For the average CFIM, the unitaries were drawn from the Haar distribution.

Given the same number of samples, we can see that the average CFIM estimator (blue line) is able to approximate with a smaller error the QFIM, compared to the estimator in Eq. \eqref{eq:quantum_from_random_measurements}. However, both estimators are able to achieve a good approximation to QFIM with only a small number of samples. Similar performance was observed for all ansatz families used in this manuscript. For a more detailed comparison of the actual quantum resources, see Appendix \ref{appendix:quantum_resources}.

On the other hand, the 2-design estimator in Eq. \eqref{eq:quantum_from_random_measurements} comes with other benefits. Specifically, its implementation can be performed by sampling unitaries from a 2-design, which implies that the unitaries have an exponentially smaller depth than that of the unitaries that are sampled from the Haar distribution. As such, the estimator in Eq. \eqref{eq:quantum_from_random_measurements} can even be experimentally realized in the early fault-tolerant era where the number of qubits remains small, but we can execute longer circuits. However, as we later show, the average CFIM can have a very promising performance when the unitaries are drawn from a more practical (in terms of depth required) ensemble.

Moreover, as we prove in Lemma \ref{lemma:descent_direction}, if we replace the QFIM with any average CFIM estimator (with $\nu \subseteq U(2^n)$), then for a sufficiently small time step, we will still move into a descent direction and as we show next in our experiments, this proves to be a powerful tool. The reason is that, similar to the Random Natural Gradient approach in \cite{kolotouros2024random}, one can choose a hardware-efficient ansatz as the random unitary (where the parameters are drawn at random) and then construct the random classical Fisher information matrix for that unitary. In our case, we will show that post-processing, the average of these CFIMs still approximates the QFIM with great accuracy.

\begin{figure}
\begin{tikzpicture}
\node (img)  {\includegraphics[scale=0.6]{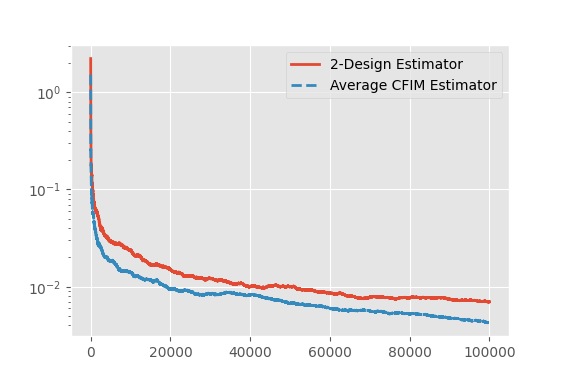}};
\node[below=of img, node distance=0cm, yshift=1.3cm] (xlabel1) {\scriptsize Samples};
\node[left=of img, node distance=0cm, rotate=90, yshift=-1.3cm, xshift=0.6cm] {\scriptsize $\norm{\tilde{\mathcal{F}}_Q - \mathcal{F}_Q}$};
\end{tikzpicture}
\caption{Distance of the quantum Fisher information from its corresponding estimators (in logarithmic scale). The red line corresponds to the estimator in Eq. \eqref{eq:quantum_from_random_measurements} while the blue line to the estimator in Eq. \eqref{eq:cfisher_haar}.}
\label{fig:distance from QFIM}
\end{figure}

The previous analyses allow us to propose a quantum algorithm that we name \emph{random-measurement imaginary-time evolution} (RMITE). The pseudoalgorithm for our method is outlined in Algorithm \ref{alg:random_measurment-ite}. The idea is that one can replace the QFIM with either of the two proposed estimators (where both require measuring the state on a random basis). In that case, the QFIM $\mathcal{F}_Q$ is replaced by its estimator $\tilde{\mathcal{F}}_Q$ and the partial differential equation is transformed to:
\begin{equation}
    \tilde{\mathcal{F}}_Q[\boldsymbol{\theta}(\tau)] \dot{\boldsymbol{\theta}} = -2\grad_{\boldsymbol{\theta}} E_\tau(\boldsymbol{\theta})
\label{eq:rmite_update}
\end{equation}

Specifically, the parameterized quantum state is rotated by a global unitary $U$, sampled by the appropriate ensemble $\nu \subseteq U(2^n)$ ($U \sim \nu$), and then is measured on the computational basis. Finally, the QFIM in Eq. \eqref{eq:diff_eq_imaginary} can be estimated by post-processing the measurement and using any of the proposed estimators.

For example, in the case where we use the average CFIM estimator $\tilde{\mathcal{F}}_Q$, the partial differential equation in imaginary-time evolution Eq. \eqref{eq:rmite_update} can be replaced by:
\begin{equation}
    \mathbb{E}_{U \sim \nu}[\mathcal{F}_C^U(\boldsymbol{\theta}(\tau)] \dot{\boldsymbol{\theta}} = -2\grad_{\boldsymbol{\theta}} E_\tau(\boldsymbol{\theta})
\label{eq:multi_unitary_approximate_varqite}
\end{equation}
It is important to stress that in the case where only a single unitary is used in Eq. \eqref{eq:cfisher_haar}, then the solution of the partial differential equation:
\begin{equation}
    \mathcal{F}_C^U (\boldsymbol{\theta}(\tau))\dot{\boldsymbol{\theta}} = -2\grad_{\boldsymbol{\theta}} E_\tau(\boldsymbol{\theta})
\label{eq:single_unitary_approximate_varqite}
\end{equation}
is equivalent to the \emph{Random Natural Gradient} \cite{kolotouros2024random}.
The error in the updated $\boldsymbol{\theta}$ is characterized by the distance between the operators, i.e. the estimator of the QFIM ($\tilde{\mathcal{F}}_Q$) and the QFIM ($\mathcal{F}_Q$). As stated in Lemma \ref{lemma:descent_direction}, the resulting update will always decrease the energy of the system. \footnote{We refer to the direction that minimizes the loss as the \emph{descent direction}.}

\begin{lemma}
    Updating the parameters of a parameterized quantum circuit according to Eq. \eqref{eq:multi_unitary_approximate_varqite} will result in a descent direction.
\label{lemma:descent_direction}
\end{lemma}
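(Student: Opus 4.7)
The plan is to show that the solution $\dot{\boldsymbol{\theta}}$ of Eq.~\eqref{eq:multi_unitary_approximate_varqite} makes an acute angle with $-\nabla_{\boldsymbol{\theta}} E_\tau$, so that a sufficiently small Euler step along $\dot{\boldsymbol{\theta}}$ strictly decreases $E_\tau$. The whole argument rests on the positive semi-definiteness of the averaged CFIM, which I establish first and then feed into the usual Taylor-expansion descent check.

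First, I would verify that for every fixed unitary $U$, the classical Fisher information matrix $\mathcal{F}_C^U(\boldsymbol{\theta})$ is positive semi-definite. This follows directly from writing its definition as a sum of rank-one outer products,
\begin{equation*}
\mathcal{F}_C^U(\boldsymbol{\theta}) = \sum_{\boldsymbol{s}} \frac{1}{p_{\boldsymbol{s}}^U(\boldsymbol{\theta})}\,\bigl(\nabla_{\boldsymbol{\theta}} p_{\boldsymbol{s}}^U(\boldsymbol{\theta})\bigr)\bigl(\nabla_{\boldsymbol{\theta}} p_{\boldsymbol{s}}^U(\boldsymbol{\theta})\bigr)^{T},
\end{equation*}
which is a nonnegative combination of PSD rank-one matrices. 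Since the cone of PSD matrices is convex and closed under taking expectations, the averaged matrix $\tilde{\mathcal{F}}_Q(\boldsymbol{\theta}) := \mathbb{E}_{U\sim\nu}[\mathcal{F}_C^U(\boldsymbol{\theta})]$ is also PSD for any ensemble $\nu \subseteq U(2^n)$.

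Next, I would compute the instantaneous change of the energy along the trajectory defined by Eq.~\eqref{eq:multi_unitary_approximate_varqite}. Assuming $\tilde{\mathcal{F}}_Q(\boldsymbol{\theta})$ is invertible (or otherwise using the Moore--Penrose pseudoinverse and restricting $\nabla_{\boldsymbol{\theta}} E_\tau$ to lie in its range), the update direction is $\dot{\boldsymbol{\theta}} = -2\,\tilde{\mathcal{F}}_Q(\boldsymbol{\theta})^{-1}\nabla_{\boldsymbol{\theta}} E_\tau$. By the chain rule,
\begin{equation*}
\frac{d E_\tau}{d\tau} = \bigl(\nabla_{\boldsymbol{\theta}} E_\tau\bigr)^{T}\dot{\boldsymbol{\theta}} = -2\,\bigl(\nabla_{\boldsymbol{\theta}} E_\tau\bigr)^{T}\tilde{\mathcal{F}}_Q(\boldsymbol{\theta})^{-1}\bigl(\nabla_{\boldsymbol{\theta}} E_\tau\bigr) \le 0,
\end{equation*}
because the inverse of a PSD matrix restricted to its range is itself PSD. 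Equivalently, for a discrete step of size $\Delta\tau$, a first-order Taylor expansion gives $E_\tau(\boldsymbol{\theta}+\Delta\tau\,\dot{\boldsymbol{\theta}}) = E_\tau(\boldsymbol{\theta}) + \Delta\tau\,(\nabla E_\tau)^T\dot{\boldsymbol{\theta}} + O(\Delta\tau^{2})$, so the nonpositive inner product above together with a sufficiently small $\Delta\tau$ guarantees strict descent whenever $\nabla_{\boldsymbol{\theta}} E_\tau \neq 0$.

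The only real subtlety, and hence the main obstacle, is handling possible rank deficiency of $\tilde{\mathcal{F}}_Q(\boldsymbol{\theta})$. When $\nu$ is supported on few unitaries (e.g.\ the single-unitary limit equivalent to the Random Natural Gradient), the averaged CFIM can be singular and Eq.~\eqref{eq:multi_unitary_approximate_varqite} may not have a unique solution. The standard fix is to replace the inverse by a pseudoinverse and check that $\nabla_{\boldsymbol{\theta}} E_\tau$ has a nontrivial projection onto the range of $\tilde{\mathcal{F}}_Q(\boldsymbol{\theta})$; on that projection the same quadratic-form argument still applies and the sign of the inner product is preserved, so the claim of descent direction goes through up to this benign caveat.
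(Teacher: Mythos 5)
Your proposal is correct and follows essentially the same route as the paper's proof: establish positive semi-definiteness of each $\mathcal{F}_C^U$ (and hence of the average by linearity of expectation), then use the chain rule to write $\frac{dE_\tau}{d\tau} = (\grad_{\boldsymbol{\theta}} E_\tau)^\intercal \dot{\boldsymbol{\theta}}$ and conclude non-positivity from the PSD quadratic form. Your explicit treatment of the rank-deficient case via the pseudoinverse is a welcome addition that the paper's proof leaves implicit, but it does not change the argument.
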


\begin{proof}
    For a detailed proof, see Appendix \ref{appendix:proof_of_lemma1}.
\end{proof}

Since Eq. \eqref{eq:single_unitary_approximate_varqite} is a special case of Eq. \eqref{eq:multi_unitary_approximate_varqite}, this implies that updating with a single CFIM will also result in a descent direction and as we later show, many times is sufficient to approximate imaginary-time evolution.  In order to apply the random measurement on the parameterized quantum state $\ket{\phi(\boldsymbol{\theta})}$, we first apply a unitary operator $U$ that is sampled from an appropriate ensemble and then the state is measured on the computational basis. By doing so, we can calculate the outcome probabilities $p_{\boldsymbol{s}}^U(\boldsymbol{\theta}) = \Tr [U\rho(\boldsymbol{\theta})U^\dagger \Pi_{\boldsymbol{s}}]$ (and consequently their first order derivatives as discussed in Preliminaries section) for projector operators $\Pi_{\boldsymbol{s}}$ with $\boldsymbol{s} \in \{0, 1\}^n$. In our experiments, we use two types of unitaries. First, for the $2$-design estimator in Eq. \eqref{eq:quantum_from_random_measurements_2_design}, we used random Clifford unitaries. Then, for the average CFIM estimator in Eq. \eqref{eq:cfisher_haar}, we choose random unitaries to be hardware-efficient parameterized circuits \cite{kandala2017hardware} with parameters sampled at random. 

Next, in the following Lemma (see Lemma \ref{lemma:error_in_approximation}), we quantify the relative error in imaginary-time evolution when we replace the QFIM with the suggested estimators.

\begin{lemma}
    Assuming that both the quantum Fisher information matrices and its estimator are full-rank and that $\dot{\boldsymbol{\theta}}_Q$ and $\dot{\tilde{\boldsymbol{\theta}}}_Q$ are given by Eqs. \eqref{eq:diff_eq_imaginary}, \eqref{eq:rmite_update}, then the relative error $\frac{\norm{\dot{\boldsymbol{\theta}}_Q - \dot{\tilde{\boldsymbol{\theta}}}_Q}}{\norm{\dot{\boldsymbol{\theta}}_Q}}$ can be upper bounded as:
    \begin{equation}
        \frac{\norm{\dot{\boldsymbol{\theta}}_Q - \dot{\tilde{\boldsymbol{\theta}}}_Q}}{\norm{\dot{\boldsymbol{\theta}}_Q}} \leq \frac{\lambda_{\max}(\mathcal{F}_Q)}{\lambda_{\min}(\tilde{\mathcal{F}}_Q)} - 1
    \end{equation}
\label{lemma:error_in_approximation}
\end{lemma}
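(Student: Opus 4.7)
The plan is to express the difference of the two flows as a single operator acting on $\dot{\boldsymbol{\theta}}_Q$ and then control that operator through the spectra of $\mathcal{F}_Q$ and $\tilde{\mathcal{F}}_Q$. Setting $g := -2\,\grad_{\boldsymbol{\theta}} E_\tau$, equations \eqref{eq:diff_eq_imaginary} and \eqref{eq:rmite_update} read $\dot{\boldsymbol{\theta}}_Q = \mathcal{F}_Q^{-1} g$ and $\dot{\tilde{\boldsymbol{\theta}}}_Q = \tilde{\mathcal{F}}_Q^{-1} g$. Eliminating $g$ using the full-rank hypothesis gives $\dot{\tilde{\boldsymbol{\theta}}}_Q = \tilde{\mathcal{F}}_Q^{-1}\mathcal{F}_Q\,\dot{\boldsymbol{\theta}}_Q$, so
\begin{equation*}
\dot{\boldsymbol{\theta}}_Q - \dot{\tilde{\boldsymbol{\theta}}}_Q \;=\; \bigl(I - \tilde{\mathcal{F}}_Q^{-1}\mathcal{F}_Q\bigr)\,\dot{\boldsymbol{\theta}}_Q .
\end{equation*}
Taking Euclidean norms and using sub-multiplicativity reduces the lemma to the operator-norm estimate $\|I - \tilde{\mathcal{F}}_Q^{-1}\mathcal{F}_Q\| \leq \lambda_{\max}(\mathcal{F}_Q)/\lambda_{\min}(\tilde{\mathcal{F}}_Q) - 1$.

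To get this estimate, I would exploit that both Fisher matrices are symmetric positive definite, so $\tilde{\mathcal{F}}_Q^{-1}\mathcal{F}_Q$ is similar via $\tilde{\mathcal{F}}_Q^{1/2}$ to the symmetric positive definite matrix $\tilde{\mathcal{F}}_Q^{-1/2}\mathcal{F}_Q\tilde{\mathcal{F}}_Q^{-1/2}$. A Rayleigh-quotient argument shows that every eigenvalue of this matrix is positive and bounded above by $\lambda_{\max}(\mathcal{F}_Q)\,\lambda_{\max}(\tilde{\mathcal{F}}_Q^{-1}) = \lambda_{\max}(\mathcal{F}_Q)/\lambda_{\min}(\tilde{\mathcal{F}}_Q)$. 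Consequently the spectrum of $I - \tilde{\mathcal{F}}_Q^{-1}\mathcal{F}_Q$ is contained in $\bigl[1 - \lambda_{\max}(\mathcal{F}_Q)/\lambda_{\min}(\tilde{\mathcal{F}}_Q),\,1\bigr)$, and in the relevant regime where the right-hand side of the claim is non-negative, the largest absolute eigenvalue is exactly $\lambda_{\max}(\mathcal{F}_Q)/\lambda_{\min}(\tilde{\mathcal{F}}_Q) - 1$.

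The main obstacle is the last conversion, because $\tilde{\mathcal{F}}_Q^{-1}\mathcal{F}_Q$ is not symmetric as a matrix on the Euclidean space, so its operator 2-norm need not coincide with its spectral radius. The natural remedy is to work in the inner product $\langle u,v\rangle_{\tilde{\mathcal{F}}_Q} := u^T \tilde{\mathcal{F}}_Q v$, in which $\tilde{\mathcal{F}}_Q^{-1}\mathcal{F}_Q$ is self-adjoint and the spectral radius coincides with the operator norm. The initial reduction $\dot{\boldsymbol{\theta}}_Q - \dot{\tilde{\boldsymbol{\theta}}}_Q = (I - \tilde{\mathcal{F}}_Q^{-1}\mathcal{F}_Q)\dot{\boldsymbol{\theta}}_Q$ can be repeated in this weighted norm, and the equivalence between the weighted and Euclidean norms is absorbed cleanly into the same spectral ratio. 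Equivalently, one may bypass the spectral-radius issue by factoring $I - \tilde{\mathcal{F}}_Q^{-1}\mathcal{F}_Q = \tilde{\mathcal{F}}_Q^{-1}(\tilde{\mathcal{F}}_Q - \mathcal{F}_Q)$ and applying $\|\tilde{\mathcal{F}}_Q^{-1}\| = 1/\lambda_{\min}(\tilde{\mathcal{F}}_Q)$ together with Weyl's inequalities on $\|\tilde{\mathcal{F}}_Q - \mathcal{F}_Q\|$. Either route produces the stated bound after routine arithmetic.
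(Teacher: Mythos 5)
Your reduction $\dot{\boldsymbol{\theta}}_Q - \dot{\tilde{\boldsymbol{\theta}}}_Q = (I - \tilde{\mathcal{F}}_Q^{-1}\mathcal{F}_Q)\dot{\boldsymbol{\theta}}_Q$ is correct and is a genuinely different (and cleaner) decomposition than the paper's, which instead writes the difference as $(\tilde{\mathcal{F}}_Q^{-1}-\mathcal{F}_Q^{-1})\grad_{\boldsymbol{\theta}}E_\tau$, bounds it by $\norm{\tilde{\mathcal{F}}_Q^{-1}-\mathcal{F}_Q^{-1}}\,\norm{\mathcal{F}_Q}\,\norm{\dot{\boldsymbol{\theta}}_Q}$, and then applies a dual Weyl inequality to the inverses. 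However, the step where you pass from ``the spectrum of $I - \tilde{\mathcal{F}}_Q^{-1}\mathcal{F}_Q$ is contained in $[1-r,1)$'', with $r=\lambda_{\max}(\mathcal{F}_Q)/\lambda_{\min}(\tilde{\mathcal{F}}_Q)$, to ``the largest absolute eigenvalue is $r-1$'' is a genuine gap: containment in $[1-r,1)$ only bounds the spectral radius by $\max(r-1,1)$, and whenever $r<2$ an eigenvalue near the upper endpoint $1$ has absolute value exceeding $r-1$. Concretely, take $\mathcal{F}_Q=\tfrac{1}{10}I$ and $\tilde{\mathcal{F}}_Q=I$: then $I-\tilde{\mathcal{F}}_Q^{-1}\mathcal{F}_Q=\tfrac{9}{10}I$, the relative error equals $9/10$, yet $r-1=-9/10$. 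This also shows the stated inequality cannot hold unconditionally; some hypothesis relating the two spectra is required (the paper's own proof quietly inserts one, assuming $\tilde{\mathcal{F}}_Q^{-1}(\mathcal{F}_Q-\tilde{\mathcal{F}}_Q)\mathcal{F}_Q^{-1}$ is ``small'' before invoking Weyl, so you have hit the same wall rather than made an avoidable error).

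Neither of your proposed remedies closes this gap as written. Working in the $\tilde{\mathcal{F}}_Q$-weighted inner product does make $\tilde{\mathcal{F}}_Q^{-1}\mathcal{F}_Q$ self-adjoint, but translating the resulting bound back to the Euclidean norm costs a factor $\sqrt{\lambda_{\max}(\tilde{\mathcal{F}}_Q)/\lambda_{\min}(\tilde{\mathcal{F}}_Q)}$ that is not absorbed into $r-1$. The factorization $I-\tilde{\mathcal{F}}_Q^{-1}\mathcal{F}_Q=\tilde{\mathcal{F}}_Q^{-1}(\tilde{\mathcal{F}}_Q-\mathcal{F}_Q)$ would require $\norm{\tilde{\mathcal{F}}_Q-\mathcal{F}_Q}\le\lambda_{\max}(\mathcal{F}_Q)-\lambda_{\min}(\tilde{\mathcal{F}}_Q)$, but Weyl's inequalities control only one tail of the spectrum of the difference; its operator norm is governed by $\max\bigl(\lambda_{\max}(\mathcal{F}_Q)-\lambda_{\min}(\tilde{\mathcal{F}}_Q),\ \lambda_{\max}(\tilde{\mathcal{F}}_Q)-\lambda_{\min}(\mathcal{F}_Q)\bigr)$, and the second term can dominate. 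To make your route rigorous you should state the regime explicitly, e.g.\ require $\lambda_{\min}(\mathcal{F}_Q)/\lambda_{\max}(\tilde{\mathcal{F}}_Q)\ge 2-r$, which confines the spectrum of $I-\tilde{\mathcal{F}}_Q^{-1}\mathcal{F}_Q$ to $[1-r,\,r-1]$ and, after the weighted-norm (or symmetrized) argument, yields the claimed ratio up to the conditioning factor noted above.
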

\begin{proof}
    For a detailed proof, see Appendix \ref{appendix:error_in_approximation}.
\end{proof}

\begin{algorithm}[h!]
\caption{Random Measurement Imaginary-Time Evolution}
\label{alg:random_measurment-ite}
\SetKwInOut{Input}{Input}
\Input{Problem Hamiltonian $\mathcal{H}$\;
Initial state $\ket{\phi(\boldsymbol{\theta_0})}=U(\boldsymbol{\theta_0})\ket{\phi}$\;
Current time $t=0$\;
Total time $T$\;
Timestep $\delta t$\;
Ensemble of unitary operators $\nu \subseteq U(d)$\;}
\While{$t<T$}{
Calculate  $\grad_{\boldsymbol{\theta}} E_\tau(\boldsymbol{\theta})$\;
Estimate quantum Fisher information matrix $\tilde{\mathcal{F}}_Q$ using either Eq. \eqref{eq:quantum_from_random_measurements} or Eq. \eqref{eq:average_classical_fisher_matrix}\;
Solve $\tilde{\mathcal{F}}_Q  \dot{\boldsymbol{\theta}} = -2\grad_{\boldsymbol{\theta}} E_\tau(\boldsymbol{\theta})$\;
Update $\boldsymbol{\theta}$ as $\boldsymbol{\theta} = \boldsymbol{\theta} + \delta t \dot{\boldsymbol{\theta}}$\;
Update $t$ as $t = t + \delta t$
}
\Return $\boldsymbol{\theta}$
\end{algorithm} 

We also like to notice how one could approximate quantities such as $\mathbb{E}_{U \sim \nu} [\mathcal{F}_C^U]$ in practice. In a real-world setting, the user would select an ensemble of unitaries $\{U_i\}$ from which they would uniformly sample from. Then, they would select the number of unitaries $K$ per iteration to calculate the average. Finally, the average can be approximated as:
\begin{equation}
    \mathbb{E}_{U \sim \nu} [\mathcal{F}_C^U] \approx \frac{1}{K} \sum_{j=1}^K \mathcal{F}_C^{U_j}
\end{equation}

Another type of approximation that one could apply is based on (random) coordinate methods \cite{ding2023random, kolotouros2024random} (see \emph{stochastic-coordinate quantum natural gradient} in \cite{kolotouros2024random}). Specifically, one can ``freeze" a subset of the total parameters and optimize the rest. For example, one can calculate the reduced classical or quantum Fisher information matrix for this subset, exploiting the codependency of the parameters on this subset. The reduced information matrices still remain positive semidefinite, and thus, updating the parameters with the descent direction preconditioned by these matrices will still result in lower energy states. However, we do not investigate these approximations in this work.

\section{Molecular Ground State Preparation}
\label{sec:quantum_chemistry}

In this section, we will investigate how our proposed method performs when the task is to prepare the ground state of certain molecular Hamiltonians. Note, however, that our method is quite versatile; any mathematical problem whose solution can be mapped to the ground state of a quantum spin-interacting Hamiltonian can be tackled using RMITE. In this paper, we choose to prepare ground states of molecular systems due to the large scientific interest in this problem. We will examine how RMITE performs when different options are selected for the estimators. For example, we will examine the 2-design estimator proposed in Eq. \eqref{eq:quantum_from_random_measurements_2_design} but also the average CFIM estimators in Eq. \eqref{eq:average_classical_fisher_matrix}, when either one ($K=1$) or many CFIMs are used.

The preparation of the ground and first excited states has gained significant interest from the quantum algorithms community \cite{bauer2020quantum}. The question of whether exponential advantages can be achieved in this task remains an open question \cite{lee2023evaluating}. One approach is to use adiabatic state preparation \cite{lee2023evaluating, aspuru2005simulated, veis2014adiabatic}, where the user utilizes the adiabatic theorem that states that a system will remain at the instantaneous ground state as long as the evolution is sufficiently slow \cite{jansen2007bounds, elgart2012note}. Other approaches include quantum imaginary-time evolution \cite{kosugi2022imaginary, kosugi2023first, tsuchimochi2023improved, PhysRevA.109.052414} or hybrid quantum/classical variational approaches \cite{kolotouros2024simulating, kandala2017hardware}. Below, we give the necessary background on quantum chemistry.

Consider a molecular system with $N$ electrons in positions $\boldsymbol{r_i}$ and $M$ nuclei in positions $\boldsymbol{R_I}$ with charge $Z_I$. The molecular Hamiltonian (using atomic units) can be expressed as:

\begin{equation}
\begin{aligned}
    H = -\sum_{i=1}^N \frac{\grad_i^2}{2} - \sum_{I=1}^M \frac{\grad_I^2}{2M_I} - \sum_{i=1}^N\sum_{I=1}^M\frac{Z_I}{|\boldsymbol{r_i} - \boldsymbol{R_I}|} + \\
    \frac{1}{2}\sum_{i=1}^N\sum_{j=i+1}^{N} \frac{1}{|\boldsymbol{r_i}-\boldsymbol{r_j}|} + \frac{1}{2}\sum_{I=1}^M\sum_{J=I+1}^M\frac{Z_A Z_J}{|\boldsymbol{R_I} - \boldsymbol{R_J}|}
\end{aligned}
\end{equation}

Assuming that the masses of the atomic nuclei are much larger than the mass of the electron, we can perform the \emph{Born-Oppenheimer} approximation. This means that we can write the total wavefunction as a product of an \emph{electronic} and a \emph{vibrational} (nuclear) wavefunction. The electronic Hamiltonian is written as:
\begin{equation}
    H_{\text{elec}} = -\sum_i \frac{\grad_i^2}{2} - \sum_{i,I}\frac{Z_I}{|\boldsymbol{r_i}-\boldsymbol{R_I}|} + \frac{1}{2}\sum_{i\neq j}\frac{1}{|\boldsymbol{r_i}-\boldsymbol{r_j}|}
\label{eq:electronic_hamiltonian}
\end{equation}
and in the second quantization representation, it can be reformulated as:
\begin{equation}
    H_{elec} = \sum_{pq}h_{pq}\hat{a}_p^\dagger \hat{a}_q + \frac{1}{2}\sum_{pqrs} h_{pqrs} \hat{a}_p^\dagger \hat{a}_q^\dagger \hat{a}_r\hat{a}_s
\label{eq:fermionic_hamiltonian}
\end{equation}
where the 1-body integrals:
\begin{equation}
    h_{pq} = \int \phi_p^*(\boldsymbol{r})\left(-\frac{1}{2}\grad^2 - \sum_I \frac{Z_I}{\boldsymbol{R_I}-\boldsymbol{r}}\right)\phi_q(\boldsymbol{r})d\boldsymbol{r}
\label{eq:one_electron_integrals}
\end{equation}
and 2-body integrals:
\begin{equation}
    h_{pqrs} = \int \frac{\phi_p^*(\boldsymbol{r_1})\phi_q^*(\boldsymbol{r_2})\phi_r(\boldsymbol{r_2})\phi_s(\boldsymbol{r_1})}{|\boldsymbol{r_1} - \boldsymbol{r_2}|}d\boldsymbol{r_1}d\boldsymbol{r_2}
\label{eq:two_electron_integrals}
\end{equation}
can be calculated efficiently using classical codes. The fermionic Hamiltonian in Eq. \eqref{eq:fermionic_hamiltonian} contains $\mathcal{O}(N_{SO}^4)$ terms, where $N_{SO}$ is the number of spin-orbitals considered and can be mapped into a qubit-Hamiltonian using the Jordan-Winger \cite{nielsen2005fermionic}, Bravyi-Kitaev \cite{mcardle2020quantum}, or Parity transformations \cite{seeley2012bravyi}. 

The number of spin orbitals can be reduced by considering an active space \cite{mcardle2020quantum, roos1980complete, helgaker2013molecular, ding2023quantum, de2023classical, battaglia2024general}. In several cases, the expected occupation number of some orbitals is either close to 0 or to 1, and as such, they can be assumed empty or occupied prior to the calculations. Then, the problem is reduced to the ambiguously occupied orbitals. Occupied orbitals are usually referred to as core, while the unoccupied orbitals are named virtual. In \cite{rossmannek2021quantum}, the authors described how to replace the one-electron integrals defined in \eqref{eq:one_electron_integrals} by an \emph{inactive} Fock operator and the active space contains an effective potential generated by the inactive electrons. Additionally, in \cite{takeshita2020increasing}, the authors showed how, by considering an active space with a constant number of qubits, one can achieve great accuracy by adding extra measurements.

For our experiments, we used the Unitary Coupled Cluster ansatz \cite{cooper2010benchmark, evangelista2019exact, ganzhorn2019gate} with double excitation (UCCD) and single-double excitations (UCCSD) but also problem-agnostic hardware-efficient ansatz families which are suitable for current devices \cite{kandala2017hardware}. The UCC ansatz is defined as:
\begin{equation}
    V(\boldsymbol{\theta}) = e^{T(\boldsymbol{\theta}) - T^\dagger(\boldsymbol{\theta})}
\end{equation}
where $T(\boldsymbol{\theta}) = T_1(\boldsymbol{\theta})+T_2(\boldsymbol{\theta}) + \ldots + T_n(\boldsymbol{\theta})$ is the excitation operator to order $n$. For instance, up to second order, the $T_i$ operators are defined as:
\begin{gather*}
    T_1(\boldsymbol{\theta}) = \sum_{i\in occ}\sum_{k \in virt} \theta_i^k \hat{a}_k^\dagger \hat{a}_i \\
    T_2(\boldsymbol{\theta}) = \frac{1}{2}\sum_{i,j\in occ} \sum_{k,l\in virt} \theta_{ij}^{kl} \hat{a}_l^\dagger \hat{a}_k^\dagger \hat{a}_j \hat{a}_i
\end{gather*}
where $occ$ denotes orbitals that are occupied in the Hartree-Fock state and $virt$ orbitals that are unoccupied. The application of a UCC-type ansatz for large molecules is rather challenging due to its large depth. Specifically, the implementation of UCCSD ansatz requires depth that scales as $\mathcal{O} ({N_{\text{occ}}\choose{2}}{N_{\text{virt}}\choose{2}}N_{\text{qubits}})$ where $N_{\text{occ}}, N_{\text{virt}}$ are the number of occupied and virtual orbitals respectively \cite{barkoutsos2018quantum}.

The system is initialized in the Hartree Fock state $\ket{\Phi_0}$ \cite{google2020hartree}, with $E_{HF} = \bra{\Phi_0}H_{\text{elec}}\ket{\Phi_0}$ where electron-electron correlations are neglected. Then, our goal is to identify the parameters $\boldsymbol{\theta^*}$ such that the loss function:
\begin{equation}
    \mathcal{L}(\boldsymbol{\theta}) = \bra{\phi(\boldsymbol{\theta})}H_{\text{elec}}\ket{\phi(\boldsymbol{\theta})}
\end{equation}
is minimized. 

Next, we discuss the several molecular systems for which we are interested in calculating their ground-state energy. We compare our methods with VarQITE, which uses the full QFIM and corresponds to the exact imaginary-time evolution for an ansatz family that is expressive enough to reach all intermediate ground states. We will use both estimators introduced in section Sec. \ref{sec:approximatin_qfisher_random} and show how RMITE compares to VarQITE when the QFIM is replaced by the aforementioned estimators.

\subsection{Technical Details}

For all our simulations, we used Qiskit Nature \cite{javadi2024quantum}, a quantum computing library suited for quantum algorithms for natural science problems, which is integrated with the PySCF \cite{sun2018pyscf} library that is used for quantum chemistry. For the molecular systems (i.e. $LiH$ and $H_2 O$) that we examined, we used the STO-3G basis. Additionally, to identify the active space, we utilized the FreezeCoreTransformer class from Qiskit \cite{javadi2024quantum}, which identifies the unique choice of active space given the number of electrons and the number of spatial orbitals.
%All methods used 800 iterations (or update of parameters) per experiment with a step size of $\eta = 0.01$ per update. 

\subsection{Lithium Hydride, $LiH$}

The first type of molecule that we choose to investigate is the \emph{Lithium Hydride $LiH$}. For this molecule, the Lithium (Li) and Hydrogen (H) atoms were allowed to have varying bond distances as seen in Figure \ref{fig:dissociation_profile}. For this molecule, we chose to use the average classical Fisher information matrix estimator given in Eq. \eqref{eq:average_classical_fisher_matrix}. The number of molecular orbitals is six, which corresponds to twelve spin orbitals. We can, however, reduce the number of required qubits by exploiting the active space \cite{rossmannek2021quantum} (i.e. using the FreezeCoreTransformer class), and reduce the molecular orbitals to five. As a result, we can map the Hamiltonian that describes the electronic structure of $LiH$ onto a ten-qubit Hamiltonian using the Bravyi-Kitaev transformation. 

\begin{figure}
\begin{tikzpicture}
\node (img)  {\includegraphics[scale=0.5]{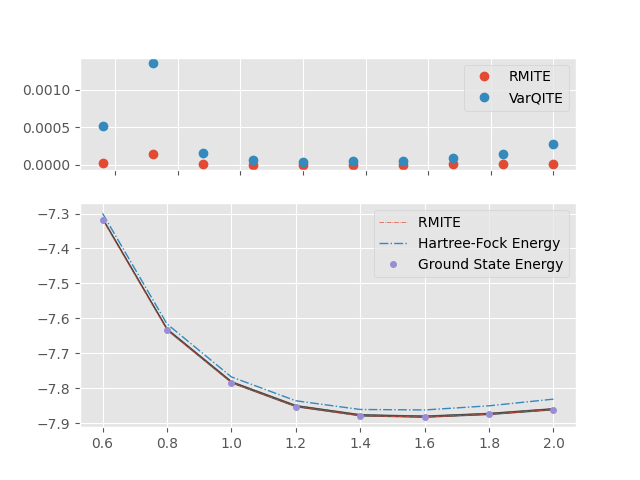}};
\node[below=of img, node distance=0cm, xshift=0.2cm, yshift=1.5cm] {\scriptsize Bond Distance [${\si{\angstrom}}$]};
\node[left=of img, node distance=0cm, rotate=90, xshift=-0.5cm, yshift=-1.2cm] {\scriptsize $\mathcal{L}(\boldsymbol{\theta})[\text{Ha}]$};
\node[left=of img, node distance=0cm, rotate=90, xshift=2.5cm, yshift=-1.2cm] {\scriptsize Relative Error};
\end{tikzpicture}
    \caption{Dissociation profile for the $LiH$ molecule. The top figure corresponds to the relative error in the calculation of the ground state for VarQITE and RMITE. The bottom figure corresponds to the energy returned by RMITE and the Hartree-Fock approximation. The former remains within chemical accuracy (black shaded area) close to the actual ground state.}
\label{fig:dissociation_profile}
\end{figure}

For $LiH$, we initially chose to use only a single random measurement per iteration. The random measurement is chosen to be a hardware-efficient ansatz circuit consisting of two layers of random Pauli rotations and nearest-neighbour connectivity. That is, in order to calculate the parameter dynamics, we use Eq. \eqref{eq:single_unitary_approximate_varqite}. The reason is that a single random classical Fisher information matrix can accurately approximate the geometry of the underlying quantum states. This was also discussed analytically in \cite{kolotouros2024random}. Furthermore, we employed a UCCSD ansatz family for which we initialize in the Hartree-Fock state (as discussed in Sec. \ref{sec:quantum_chemistry}).

In the top of Figure \ref{fig:dissociation_profile}, we illustrate the relative error in the ground state energy calculation of $LiH$ for different bond distances. The blue dots correspond to VarQITE \cite{mcardle2019variational}, where the full QFIM is used, while the red dots correspond to RMITE, in which the QFIM has been replaced by a random CFIM. Both methods were given 800 iterations with a timestep of $\delta t=0.01$. We can clearly see that the error of RMITE is significantly less than VarQITE. In the lower Figure \ref{fig:dissociation_profile}, we visualize how RMQITE compares to the HF approximation. Our method is able to remain always close (within chemical accuracy) to the true ground state.

In this case, our approach is able to offer large advantages in terms of resources, either quantum (i.e. quantum states prepared) or classical (on the number of matrices we have to store and updates in the parameters). As we can see in Figure \ref{fig:lih_resources_comparison}, RMITE outperforms VarQITE. On the left-hand subplot, we compare the number of optimizer iterations, i.e. the number of times we update the parameters of the parameterized quantum circuit. Moreover, the greatest advantage can be seen on the right-hand subplot of \ref{fig:lih_resources_comparison}, where we quantify the actual calls (different states we have to prepare) on the quantum computer. 

\begin{figure}
\centering
\begin{tikzpicture}
\node (img) {\includegraphics[scale=0.42]{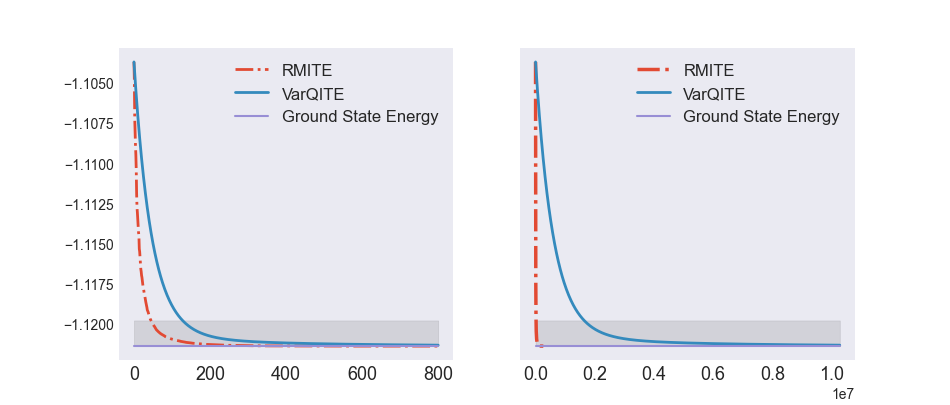}};
\node[left=of img, node distance=0cm, rotate=90, xshift=0.5cm, yshift=-1.5cm] {\scriptsize $\mathcal{L}(\boldsymbol{\theta})$[\text{Ha}]};
\node[below=of img, node distance=0cm, xshift=-1.8cm, yshift=1.3cm] {\scriptsize Number of updates};
\node[below=of img, node distance=0cm, xshift=2.0cm, yshift=1.3cm] {\scriptsize Quantum State Preparations};
\end{tikzpicture}
\caption{Comparison of RMITE and VarQITE on the number of updates (left figure) and quantum calls (right figure). The gray-shaded area corresponds to the chemical accuracy. The advantage of RMITE is illustrated in both figures.}
\label{fig:lih_resources_comparison}
\end{figure}

Moreover, we highlight the performance of our algorithm when multiple random classical Fisher information matrices are used per iteration, i.e. we investigate what happens when we increase $K$. Specifically, at each iteration, we calculate $K=5$ random CFIMs and use Eq. \eqref{eq:multi_unitary_approximate_varqite} to update the parameters. The results are illustrated in Figure \ref{fig:lih_multiple_cfims}. As it is clearly visualized, as we increase the number of sampled unitaries, we are able to remain close to the imaginary-time evolved states. In our case, we can achieve that by using significantly fewer resources than $\mathcal{O}(m^2)$ (see blue line). However, even a single random measurement results in a descent direction (see red line), offering a fast convergence. Thus, we can conclude that in the case of the average CFIM estimator, a single CFIM suffices to move in a descent direction, but multiple CFIMs are needed to remain close to the actual imaginary-time evolution.

\subsection{Water, $H_2 O$}

The second molecule we examine is water. For this molecule, the Oxygen (O) atom was at position $(0, 0, 0.115{\si{\angstrom}})$ and the two Hydrogen (H) atoms at the positions $(0, 0.754{\si{\angstrom}}, -0.459{\si{\angstrom}})$ and $(0, -0.754{\si{\angstrom}}, -0.459{\si{\angstrom}})$. This corresponds to a molecule with a bond length of approximately $0.958{\si{\angstrom}}$ and a bond angle close to $\ang{104.5}$. For this problem, we choose to use the estimator in Eq. \eqref{eq:quantum_from_random_measurements} in which the unitaries for the random measurements are drawn from the Clifford group. As a parameterized family of states, we used a hardware-efficient ansatz (as seen in Figure \ref{fig:hardware_efficient_ansatz}) with 4 layers. This architecture corresponds to a parameterized circuit of $m=100$ parameters. It is well-known that these hardware-efficient ansatz families suffer from the barren plateaux problem as the number of parameters increase (and are initialized at random), as both the variance and the gradient of the expectation value go exponentially to zero. However, they can provide a test bed for our method, showcasing its ability to work well with any chosen parameterized family of gates.

\begin{figure}
    \centering
    \includegraphics[scale=0.45]{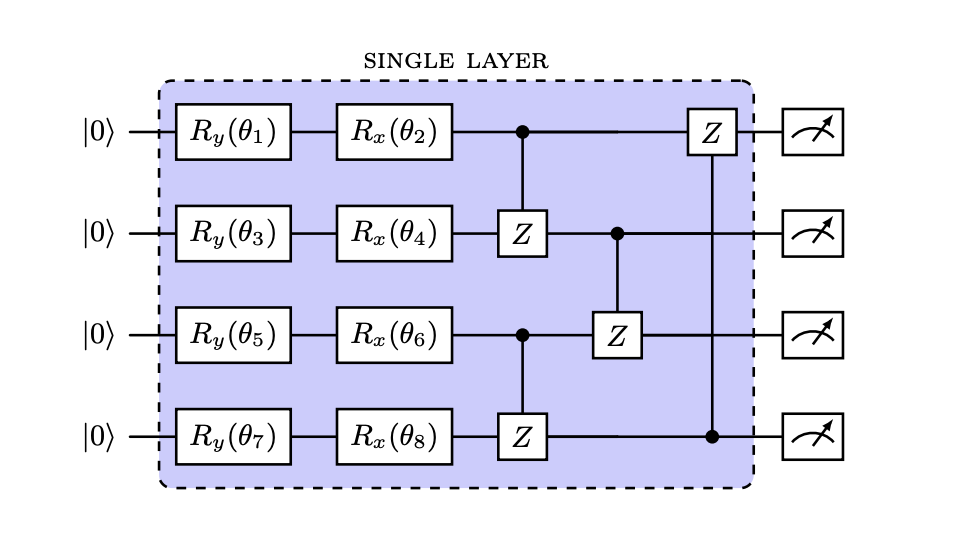}
    \caption{Example of a single-layer parameterized family of gates used for the $H_2O$ experiments.}
    \label{fig:hardware_efficient_ansatz}
\end{figure}

For the estimator in Eq. \eqref{eq:quantum_from_random_measurements}, we used $K=10$ and $K=20$ random unitaries per iteration to show how one can achieve a very good approximation with significantly less calls on the quantum processor. The reason that we increased the number of random unitaries compared to the average CFIM case is that this estimator requires more samples to achieve a good approximation of the QFIM (but still much less resources than those required for the QFIM). As it can be clearly visualized in Figure \ref{fig:h2o_ground_state_preparation}, our algorithms offers a valuable advantage over VarQITE since the number of calls on the quantum processor is reduced significantly.

Overall, we can see that the advantages of our method are two-faced. First of all, when we choose a small number of repetitions $K$, the algorithm resembles a quantum information-theoretic optimization algorithm. That is, we get information about whether certain parameters do change the underlying quantum state or not, similar to \cite{kolotouros2024random, stokes2020quantum}. On the other hand, as $K$ increases, our estimator approximates even better the QFIM. As such, the method transforms from an information-theoretic optimization technique to an approximator of the QFIM, and thus, RMITE resembles VarQITE.

\begin{figure}
\begin{tikzpicture}
\node (img)  {\includegraphics[scale=0.53]{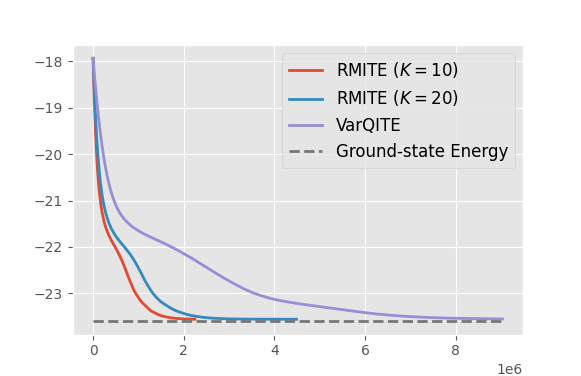}};
\node[below=of img, node distance=0cm, yshift=1.3cm] (xlabel1) {\scriptsize Quantum State Preparations};
\node[left=of img, node distance=0cm, rotate=90, yshift=-1.1cm, xshift=0.6cm] {\scriptsize $\mathcal{L}(\boldsymbol{\theta})$[\text{Ha}]};
\end{tikzpicture}
\caption{Performance of RMITE (with Eq. \eqref{eq:quantum_from_random_measurements} as the estimator) compared to VarQITE on the task of preparing the ground state of $H_2O$.}
\label{fig:h2o_ground_state_preparation}
\end{figure}

%The interested readers can find more experiments on the Nitrogen $N_2$ molecule in Appendix \ref{appendix:additional_experiments}.

\begin{figure}
\begin{tikzpicture}
\node (img)  {\includegraphics[scale=0.53]{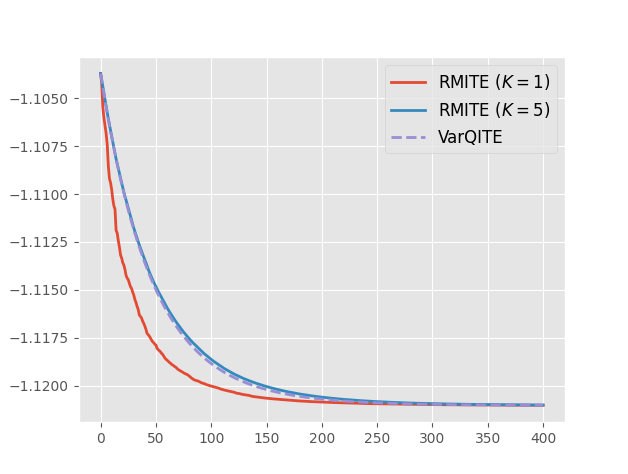}};
\node[below=of img, node distance=0cm, yshift=1.3cm] (xlabel1) {\scriptsize Iterations};
\node[left=of img, node distance=0cm, rotate=90, yshift=-0.9cm, xshift=0.6cm] {\scriptsize $\mathcal{L}(\boldsymbol{\theta})$[\text{Ha}]};
\end{tikzpicture}
\caption{Performance of imaginary-time evolution using a single random measurement, multiple random measurements, or the quantum Fisher information matrix per iteration and the estimator in Eq. \eqref{eq:cfisher_haar}. Increasing the number of random measurements improves the approximation to the imaginary-time evolution, while fewer random measurements result in rapid descent.}
\label{fig:lih_multiple_cfims}
\end{figure}

\section{Discussion}
\label{sec:discussion}

In this paper, we introduced two novel quantum algorithms that bring imaginary-time evolution a step closer to a practical implementation on a quantum computer. We achieve this by minimizing the number of calls on the quantum processor while at the same time using the same number of classical resources.

Quantum imaginary-time evolution can be performed using a hybrid quantum/classical approach. Specifically, the user specifies a parameterized architecture on a quantum computer and its parameters are iteratively updated so that the parameterized state remains as close as possible to the imaginary-evolved quantum state. The parameter dynamics are found by solving a differential equation that is derived by measuring $\Theta(m^2)$ quantum states at each iteration. The need for $\Theta(m^2)$ quantum states comes from the fact that the partial differential equation that governs the dynamics of the parameters requires the calculation of the quantum Fisher information matrix (QFIM), an object that describes the underlying geometry of parameterized quantum states. This imposes a bottleneck in the practicality of this method as the number of parameters becomes large.

In this paper, we investigated how one can approximate the QFIM using tools from the random-measurement theory \cite{elben2023randomized}. We showed that the QFIM can be reconstructed to user-dependent precision, using $\Theta(Km)$ quantum states, where $m$ is the number of parameters in the parameterized quantum circuit and $K$ is user-specified and, in practice, much smaller than $m$. To approximate the QFIM, one has to rotate the parameterized state by a random unitary $U$ and then measure the state on the computational basis.

Our first result is inspired by \cite{elben2019statistical}, where the authors showed how the fidelity of two quantum states can be calculated using random measurements. By exploiting their result as well as some tools from random matrix theory \cite{mele2024introduction}, we proposed a quantity that can estimate the quantum Fisher information matrix by randomly rotating a quantum state by a unitary that is sampled from a 2-design and then measuring the state on the computational basis. In our experiments, we chose unitaries to be sampled from the Clifford group.

Then, we proposed that the QFIM can also be approximated by post-processing classical Fisher information matrices (CFIMs), where each CFIM requires quadratically less quantum resources than the QFIM. The latter approximator converges faster (in terms of error in the $l_2$ norm) to the QFIM when the measurement is random. As we prove, even a single random measurement points in the descent direction for a sufficiently small step size. Finally, based on our findings, we propose two different quantum/classical algorithms where we benchmark on quantum chemistry problems, where the task is to prepare their ground state. In all experiments, our algorithms converge significantly faster than VarQITE, requiring fewer calls on the quantum processor while at the same time reaching ground state approximations within chemical accuracy.

Our method opens up many directions of research for testing the limit of practicality of already known quantum algorithms when certain computationally expensive quantities (such as the QFIM) are estimated (up to some error) by computationally cheaper objects. Techniques such as our proposed algorithm showcase their practicality as quantum-inspired classical approximation methods, on top of the advantages that were previously mentioned. A user can classically prepare and store ansatz states, and by performing $K$ measurements on the stored state, they can approximate the QFIM and, eventually, identify the direction that follows imaginary-time evolution. 

As we showed in Appendix \ref{appendix:error_in_approximation}, our estimators require less quantum resources if we allow for a specific tolerance in the approximation. However, the user still needs to estimate the sum (over an exponentially large set of outcomes) of products of probability distributions and their derivatives. The questions of whether this can be done efficiently or if considering a smaller subset of the outcomes still provides a good estimator remain open. Identifying whether this can be done efficiently in a quantum computer will determine whether our proposed method is suited as a quantum algorithm or a quantum-inspired algorithm (since the quantum resources are less than the QFIM, providing an advantage in that sense).

Finally, another interesting research direction is to theoretically quantify the sample complexity $K$ on both estimators. Different choices for $K$ may be needed to achieve a desired accuracy $\epsilon$ (error from the QFIM) for different parameterized quantum states. Thus, it is essential to understand how the geometry of the underlying quantum states is connected to the sampling requirements of our algorithm.

\newpage 

\bibliographystyle{unsrt}
\bibliography{References}

\onecolumngrid

\appendix

\section{Random Measurements and QFIM}
\label{appendix:random_measurements}

Randomized measurements constitute a powerful tool that has been exploited for several different applications throughout the quantum computing literature \cite{elben2023randomized, elben2019statistical, notarnicola2023randomized, huang2020predicting, brydges2019probing}. We begin by recalling a few standard notions on random operators acting on our space of parametrized qubits, referring to  \cite{roberts2017chaos, mele2024introduction} for a comprehensive introduction. 
%For concreteness, it is essential to introduce a number of tools from random matrix theory. For more details, we point the interested readers at \cite{roberts2017chaos, mele2024introduction}.

\begin{defi}
(Haar Measure) \cite{mele2024introduction}. The Haar measure on the unitary group $U(d)$ is the unique probability measure $\mu_H$ \cite{simon1996representations} that is both \emph{left} and \emph{right} invariant over the group $U(d)$, i.e. for all integrable function $f: U(d) \rightarrow \mathcal{L}(\mathbb{C}^d)$ and for all $V\in U(d)$ we have:
\begin{equation}
    \int_{U(d)} f(U)d\mu_H(U) = \int_{U(d)} f(UV)d\mu_H(U) = \int_{U(d)} f(VU)d\mu_H(U)
\label{eq:haar_measure}
\end{equation}
\end{defi}
In this paper, we will denote the integral of a function $f(U)$ over the Haar measure as the expected value of $f(U)$ with respect to the probability measure $\mu_H$, denoted as $\mathbb{E}_{U\sim \mu_H} [f(U)]$:
\begin{equation}
   \mathbb{E}_{U\sim \mu_H} [f(U)] := \int_{U(d)} f(U)d\mu_H(U)
\end{equation}

A quantity that will play a very important role in our analysis is the $k$-moment operator, with $k \in \mathbb{N}$ (or else the $k$-fold twirl).

\begin{defi}
    ($k$-moment operator). The $k$-moment operator, with respect to the probability measure $\mu_H$, is defined as $\mathcal{M}_k : \mathcal{L}((\mathbb{C}^d)^{\otimes k}) \rightarrow \mathcal{L}((\mathbb{C}^d)^{\otimes k})$:
    \begin{equation}
        \mathcal{M}_{\mu_H}^{(k)}(O) := \mathbb{E}_{U\sim \mu_H} [U^{\otimes k} O U^{\dagger \otimes k}]
    \end{equation}
    for all operators $O \in \mathcal{L}((\mathbb{C}^d)^{\otimes k})$.
\label{definition:k_moment_operator}
\end{defi}

As it turns out, there are tools that will allow us to calculate the $k$-th moment operators. Specifically, the moment operator defined in Definition \ref{definition:k_moment_operator} is the orthogonal projector onto the commutant $Comm(U(d),k)$. The commutant is defined below.

\begin{defi}
    (Commutant). Given $S \subseteq \mathcal{L}(\mathbb{C}^d)$, we define its $k$-th order commutant as:
    \begin{equation}
        Comm(S,k) := \{ A \in \mathcal{L}((\mathbb{C}^d)^{\otimes k}): [A, B^{\otimes k}] = 0 \; \; \forall B\in S\}
    \end{equation}
\end{defi}

As it can be easily seen, a set of operators that commute with every unitary $U^{\otimes k}$ are the permutation operators. These are defined as:

\begin{defi}
    (Permutation operators). Given $\pi \in S_k$ an element of the symmetric group $S_k$, we define the permutation matrix $V_d(\pi)$ to be the unitary matrix that satisfies:
    \begin{equation}
        V_d(\pi) \ket{\psi_1}\otimes \ldots \otimes \ket{\psi_k} = \ket{\psi_{\pi^{-1}(1)}}\otimes \ldots \otimes \ket{\psi_{\pi^{-1}(k)}}
    \end{equation}
    for all $\ket{\psi_1},\ldots, \ket{\psi_k}\in \mathbb{C}^d$
\end{defi}

A well-celebrated result is the Schur-Weyl duality \cite{roberts2017chaos}, that states that the image of the $k$-moment operator is spanned by the permutation operators. As such, we can calculate the first and second moments of an operator $O \in \mathcal{L}(\mathbb{C}^d)$ as:

\begin{gather}
             \mathbb{E}_{U\sim \mu_H} [U O U^\dagger] = \frac{\Tr(O)}{d}I \\
              \mathbb{E}_{U\sim \mu_H} [U^{\otimes 2} O U^{\dagger \otimes 2}] = \frac{\Tr(O) - d^{-1}\Tr(\mathbb{S}O)}{d^2 - 1} \mathbb{I} + \frac{\Tr(\mathbb{S}O) - d^{-1}\Tr(O)}{d^2 - 1} \mathbb{S}
\label{eq:haar_moments}
\end{gather}
where $I, \mathbb{I}$ correspond to the identity operators on $\mathbb{C}^d$ and $(\mathbb{C}^d)^{\otimes 2}$ respectively and $\mathbb{S}$ is the SWAP operator defined as:
\begin{equation}
    \mathbb{S}(\ket{\psi_1}\otimes \ket{\psi_2})= \ket{\psi_2} \otimes \ket{\psi_1}
\end{equation}
Our starting point originates from \cite{brydges2019probing,elben2019statistical}, where the authors showed that the fidelity between two quantum states $\rho_1,\rho_2$ can be calculated using the following Theorem.

\begin{theorem}
    (Fidelity of two quantum states \cite{elben2019statistical}) Consider two quantum states $\rho_1$ and $\rho_2$ on $n$ qubits in Hilbert space $\mathcal{H}$ of dimension $\mathcal{D} = 2^n$. For global random unitaries $U$, the overlap between the quantum states is given by:
    \begin{equation}
        \Tr [\rho_1 \rho_2] = 2^n \sum_{\boldsymbol{s},\boldsymbol{s'}} (-2^n)^{-D_G[\boldsymbol{s},\boldsymbol{s'}]}\bra{\boldsymbol{s}}\bra{\boldsymbol{s'}}\mathcal{M}_{\mu_H}^{(2)}(\rho_1 \otimes \rho_2)\ket{\boldsymbol{s}}\ket{\boldsymbol{s'}}
    \label{eq:fidelity_with_random_measurements}
    \end{equation}
    where the global Hamming distance $D_G$ is defined as:
    \begin{equation}
        D_G[\boldsymbol{s}, \boldsymbol{s'}] = \begin{cases}
            0 \; \text{if $\boldsymbol{s} = \boldsymbol{s'}$}\\
            1 \; \text{if $\boldsymbol{s} \neq \boldsymbol{s'}$}
        \end{cases}
    \label{eq:global_hamming_distance}
    \end{equation}
    and $\mathcal{M}_{\mu_H}^{(k)}(\cdot) := \mathbb{E}_{U\sim \mu_H} [U^{\otimes k} (\cdot) U^{\dagger \otimes k}]$ is the $k$-th moment operator.
\end{theorem}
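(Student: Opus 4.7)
The plan is to evaluate the right-hand side of Eq.~\eqref{eq:fidelity_with_random_measurements} in closed form by invoking the Schur--Weyl expansion of the second Haar moment $\mathcal{M}_{\mu_H}^{(2)}$ recorded in Eq.~\eqref{eq:haar_moments}, and then carrying out the weighted double sum over bitstring pairs. Throughout I set $d = 2^n$ and $F := \Tr(\rho_1\rho_2)$.

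First, I would specialize the second-moment identity to the operator $O = \rho_1 \otimes \rho_2$ acting on $(\mathbb{C}^d)^{\otimes 2}$. Using $\Tr(\rho_1 \otimes \rho_2) = \Tr(\rho_1)\Tr(\rho_2) = 1$ together with the defining property $\Tr\bigl(\mathbb{S}(\rho_1\otimes\rho_2)\bigr) = \Tr(\rho_1\rho_2)$ of the SWAP operator, this gives
\begin{equation*}
\mathcal{M}_{\mu_H}^{(2)}(\rho_1 \otimes \rho_2) = \frac{1 - d^{-1}F}{d^2 - 1}\,\mathbb{I} + \frac{F - d^{-1}}{d^2 - 1}\,\mathbb{S}.
\end{equation*}
Next, I would evaluate the diagonal matrix elements in the product computational basis. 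Since $\bra{\boldsymbol{s}}\bra{\boldsymbol{s'}}\mathbb{I}\ket{\boldsymbol{s}}\ket{\boldsymbol{s'}} = 1$ and $\bra{\boldsymbol{s}}\bra{\boldsymbol{s'}}\mathbb{S}\ket{\boldsymbol{s}}\ket{\boldsymbol{s'}} = \langle\boldsymbol{s}\boldsymbol{s'}|\boldsymbol{s'}\boldsymbol{s}\rangle = \delta_{\boldsymbol{s},\boldsymbol{s'}}$, the SWAP contribution is supported on the diagonal $\boldsymbol{s} = \boldsymbol{s'}$ while the identity contribution spreads uniformly over all $d^2$ pairs.

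Finally, I would perform the weighted double sum by splitting it into diagonal and off-diagonal contributions. The weight $(-d)^{-D_G[\boldsymbol{s},\boldsymbol{s'}]}$ equals $1$ on the $d$ diagonal pairs (each collecting both the identity and SWAP coefficients) and $-1/d$ on the $d^2 - d$ off-diagonal pairs (each collecting only the identity coefficient). Bundling the two subtotals,
\begin{equation*}
\sum_{\boldsymbol{s},\boldsymbol{s'}} (-d)^{-D_G[\boldsymbol{s},\boldsymbol{s'}]}\,\bra{\boldsymbol{s}}\bra{\boldsymbol{s'}}\mathcal{M}_{\mu_H}^{(2)}(\rho_1\otimes\rho_2)\ket{\boldsymbol{s}}\ket{\boldsymbol{s'}} = d\cdot\frac{1-d^{-1}F + F - d^{-1}}{d^2 - 1} - (d-1)\cdot\frac{1 - d^{-1}F}{d^2 - 1} = \frac{F}{d}.
\end{equation*}
Multiplying through by the prefactor $2^n = d$ reproduces $F = \Tr(\rho_1\rho_2)$ as claimed.

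There is no genuinely hard step: the heavy lifting is done by the Schur--Weyl decomposition of the commutant into $\Span\{\mathbb{I},\mathbb{S}\}$, which is already recorded in the preceding subsection of the appendix. The only subtlety is the bookkeeping in the weighted sum so that the contributions from the identity term (spread over all $d^2$ pairs) and from the SWAP term (supported on the diagonal) combine and cancel to give exactly $F/d$. This cancellation is precisely what singles out the weights $(-2^n)^{-D_G}$ in the theorem statement, rather than any other ad hoc choice.
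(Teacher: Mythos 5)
Your verification is correct: specializing the second Haar-moment formula to $O=\rho_1\otimes\rho_2$ gives the $\Span\{\mathbb{I},\mathbb{S}\}$ decomposition with coefficients $(1-d^{-1}F)/(d^2-1)$ and $(F-d^{-1})/(d^2-1)$, the SWAP term contributes only on the diagonal pairs, and the weighted sum collapses to $F/d$ exactly as you compute. Note, however, that the paper does not prove this theorem at all --- it is imported verbatim from Elben et al.\ \cite{elben2019statistical} as the starting point of the appendix --- so there is no in-paper argument to compare against; your derivation is a self-contained check using only the Schur--Weyl second-moment identity that the paper records just before the theorem, which is essentially how the original reference establishes it for global unitaries. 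Two small remarks: you silently (and correctly) read the matrix element as $\bra{\boldsymbol{s}}\bra{\boldsymbol{s'}}\mathcal{M}_{\mu_H}^{(2)}(\rho_1\otimes\rho_2)\ket{\boldsymbol{s}}\ket{\boldsymbol{s'}}$, repairing the evident $\ket{\boldsymbol{s'}}\ket{\boldsymbol{s'}}$ typo in the stated equation; and your derivation quietly uses $\Tr(\rho_1)=\Tr(\rho_2)=1$ but nowhere needs purity, so the identity holds for arbitrary mixed states, consistent with the generality claimed in the statement.
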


In our case, we work with \emph{parameterized quantum states} consisting of a total of $m$ parameters $\boldsymbol{\theta} = (\theta_1, \theta_2, \ldots, \theta_m)$. Specifically, let $\rho_1 := \rho(\boldsymbol{\theta})$ and $\rho_2 := \rho(\boldsymbol{\theta}+\boldsymbol{\epsilon})$ and let also $\rho_1, \rho_2$ be \emph{pure}. As such, $\rho_1$ and $\rho_2$ can be written as:
\begin{equation}
\begin{gathered}
    \rho_1 = \rho(\boldsymbol{\theta}) = \ketbra{\phi(\boldsymbol{\theta})} \\
    \rho_2 = \rho(\boldsymbol{\theta} + \boldsymbol{\epsilon}) = \ketbra{\phi(\boldsymbol{\theta}+\boldsymbol{\epsilon})}
\end{gathered}
\end{equation}
 
  We can express the elements of the moment operator in the computation basis as:
\begin{equation}
\begin{gathered}
    \bra{\boldsymbol{s},\boldsymbol{s'}}\mathcal{M}_{\mu_H}^{(2)}(\rho(\boldsymbol{\theta}) \otimes \rho(\boldsymbol{\theta}+\boldsymbol{\epsilon})) \ket{\boldsymbol{s},\boldsymbol{s'}} = \int_{U(d)} d\mu_H(U) \bra{\boldsymbol{s}}\bra{\boldsymbol{s'}}U^{\otimes 2}\rho(\boldsymbol{\theta})\otimes \rho(\boldsymbol{\theta}+\boldsymbol{\epsilon}) U^{\dagger \otimes 2} \ket{\boldsymbol{s}}\ket{\boldsymbol{s'}} \\
    \int_{U(d)} d\mu_H(U) \bra{\boldsymbol{s}}U\rho(\boldsymbol{\theta}) U^{\dagger} \ket{\boldsymbol{s}}\bra{\boldsymbol{s'}}U\rho(\boldsymbol{\theta}+\boldsymbol{\epsilon}) U^{\dagger}\ket{\boldsymbol{s'}} = \int_{U(d)} d\mu_H(U) \Tr[\rho(\boldsymbol{\theta})U^\dagger \Pi_{\boldsymbol{s}} U]\Tr[\rho(\boldsymbol{\theta}+\boldsymbol{\epsilon})U^\dagger \Pi_{\boldsymbol{s'}} U] \\
    = \int_{U(d)} d\mu_H (U) \Tr[\rho(\boldsymbol{\theta}) \otimes\rho(\boldsymbol{\theta}+\boldsymbol{\epsilon})  U^{\dagger \otimes 2}(\Pi_{\boldsymbol{s}} \otimes \Pi_{\boldsymbol{s'}}) U^{\otimes 2}] = \mathbb{E}_{U\sim \mu_H}\big[p_{\boldsymbol{s}}^U(\boldsymbol{\theta}) p_{\boldsymbol{s'}}^U (\boldsymbol{\theta}+\boldsymbol{\epsilon})\big]
\end{gathered}
\end{equation}
where $p_{\boldsymbol{s}}^U = \Tr[\rho(\boldsymbol{\theta})U^\dagger \Pi_{\boldsymbol{s}} U]$. As such we can rewrite Eq. \eqref{eq:fidelity_with_random_measurements} as:
\begin{equation}
\begin{gathered}
    \Tr [\rho(\boldsymbol{\theta}) \rho(\boldsymbol{\theta} + \boldsymbol{\epsilon})] = 2^n \sum_{\boldsymbol{s}, \boldsymbol{s'}} (-2^n)^{-D_G[\boldsymbol{s}, \boldsymbol{s'}]} \Tr \Big[\rho(\boldsymbol{\theta}) \otimes \rho(\boldsymbol{\theta}+\boldsymbol{\epsilon})\mathcal{M}_{\mu_H}^{(2)}(\Pi_{\boldsymbol{s}} \otimes \Pi_{\boldsymbol{s'}}) \Big] \\
    =2^n \sum_{\boldsymbol{s}, \boldsymbol{s'}} (-2^n)^{-D_G[\boldsymbol{s}, \boldsymbol{s'}]} \mathbb{E}_{U\sim \mu_H}\big[p_{\boldsymbol{s}}^U(\boldsymbol{\theta}) p_{\boldsymbol{s'}}^U (\boldsymbol{\theta}+\boldsymbol{\epsilon})\big]
\end{gathered}
\end{equation}
By expanding the sum, the fidelity between the two states can be written as:
\begin{equation}
\begin{gathered}
    \Tr [\rho(\boldsymbol{\theta}) \rho(\boldsymbol{\theta}+\boldsymbol{\epsilon})] = 2^n \sum_{\boldsymbol{s},\boldsymbol{s'}} (-2^n)^{-D_G[\boldsymbol{s},\boldsymbol{s'}]}\mathbb{E}_{U\sim \mu_H}\big[p_{\boldsymbol{s}}^U(\boldsymbol{\theta}) p_{\boldsymbol{s'}}^U (\boldsymbol{\theta}+\boldsymbol{\epsilon})\big]  \\
     = 2^n \sum_{\boldsymbol{s}} (-2^n)^{-D_G[\boldsymbol{s},\boldsymbol{s}]}\mathbb{E}_{U\sim \mu_H}\big[p_{\boldsymbol{s}}^U(\boldsymbol{\theta}) p_{\boldsymbol{s'}}^U (\boldsymbol{\theta}+\boldsymbol{\epsilon})\big] +     2^n \sum_{\substack{\boldsymbol{s},\boldsymbol{s'} \\ \boldsymbol{s} \neq \boldsymbol{s'}}} (-2^n)^{-D_G[\boldsymbol{s},\boldsymbol{s'}]}\mathbb{E}_{U\sim \mu_H}\big[p_{\boldsymbol{s}}^U(\boldsymbol{\theta}) p_{\boldsymbol{s'}}^U (\boldsymbol{\theta}+\boldsymbol{\epsilon})\big]\\
     =  2^n \sum_{\boldsymbol{s}} \mathbb{E}_{U\sim \mu_H}\big[p_{\boldsymbol{s}}^U(\boldsymbol{\theta}) p_{\boldsymbol{s'}}^U (\boldsymbol{\theta}+\boldsymbol{\epsilon})\big] - \sum_{\substack{\boldsymbol{s},\boldsymbol{s'} \\ \boldsymbol{s} \neq \boldsymbol{s'}}}\mathbb{E}_{U\sim \mu_H}\big[p_{\boldsymbol{s}}^U(\boldsymbol{\theta}) p_{\boldsymbol{s'}}^U (\boldsymbol{\theta}+\boldsymbol{\epsilon})\big]
\end{gathered}
\end{equation} 
where we used Eq. \eqref{eq:global_hamming_distance}. Consider now the infidelity between two quantum states as defined in Eq. \eqref{eq:infidelity}
\begin{equation}
    d_F\Big(\rho(\boldsymbol{\theta}),\rho( \boldsymbol{\theta}+\boldsymbol{\epsilon})\Big) = 1 - \Tr [\rho(\boldsymbol{\theta}) \rho(\boldsymbol{\theta}+\boldsymbol{\epsilon})]
\end{equation}
If we Taylor expand the infidelity around $\boldsymbol{\epsilon} = \boldsymbol{0}$, then we have:
\begin{equation*}
    \begin{gathered}
       1 - \Tr[\rho(\boldsymbol{\theta}) \rho(\boldsymbol{\theta} + \boldsymbol{\epsilon})] 
        = 1 -\Tr[\rho(\boldsymbol{\theta})\rho(\boldsymbol{\theta})] - \sum_{i=1}^m \frac{\partial}{\partial \epsilon_i}\Tr[\rho(\boldsymbol{\theta}) \rho(\boldsymbol{\theta} + \boldsymbol{\epsilon})]\Big{|}_{\boldsymbol{\epsilon} = 0} \epsilon_i \\
         - \frac{1}{2}\sum_{i,j=1}^m \frac{\partial}{\partial \epsilon_i\partial \epsilon_j}\Tr[\rho(\boldsymbol{\theta}) \rho(\boldsymbol{\theta} + \boldsymbol{\epsilon})]\Big{|}_{\boldsymbol{\epsilon} = 0} \epsilon_i \epsilon_j + \mathcal{O}(\norm{\boldsymbol{\epsilon}}_1^3)
    \end{gathered}
\end{equation*}
where the partial derivatives at $\boldsymbol{\epsilon}= \boldsymbol{0}$ are zero, since $\Tr[\rho(\boldsymbol{\theta})\rho(\boldsymbol{\theta}+\boldsymbol{\epsilon})]$ is maximized at $\boldsymbol{\epsilon} = \boldsymbol{0}$. As such, the infidelity, can be expressed as:
\begin{equation}
    d_F\Big(\rho(\boldsymbol{\theta}),\rho( \boldsymbol{\theta}+\boldsymbol{\epsilon})\Big) = \frac{1}{4}\boldsymbol{\epsilon}^T \mathcal{F}_Q (\boldsymbol{\theta}) \boldsymbol{\epsilon} + \mathcal{O}(\norm{\boldsymbol{\epsilon}}_1^3)
\end{equation}
where for small $\boldsymbol{\epsilon}$, the higher-order terms can be neglected. The matrix $\mathcal{F}_Q (\boldsymbol{\theta})$ is the \emph{quantum Fisher information matrix}, defined as:
\begin{equation}
    \mathcal{F}_Q(\boldsymbol{\theta}) = - 2 \grad^2 \Tr[\rho(\boldsymbol{\theta})\rho(\boldsymbol{\theta} + \boldsymbol{\epsilon})]\Big{|}_{\boldsymbol{\epsilon} = 0}
\end{equation}
Thus, a matrix element $[F_Q(\boldsymbol{\theta})]_{ij}$ can be written as:
\begin{equation}
    [\mathcal{F}_Q(\boldsymbol{\theta})]_{ij} =  -2^{n+1} \sum_{\boldsymbol{s}} \mathbb{E}_{U\sim \mu_H}\Bigg[p_{\boldsymbol{s}}^U(\boldsymbol{\theta})  \frac{\partial^2 p_{\boldsymbol{s}}^U(\boldsymbol{\theta}+\boldsymbol{\epsilon})}{\partial \epsilon_i \partial \epsilon_j}\Big{|}_{\boldsymbol{\epsilon}=0}\Bigg] +2 \sum_{\substack{\boldsymbol{s},\boldsymbol{s'} \\\boldsymbol{s} \neq \boldsymbol{s'}}}\mathbb{E}_{U\sim \mu_H}\Bigg[p_{\boldsymbol{s}}^U(\boldsymbol{\theta}) \frac{\partial^2 p_{\boldsymbol{s'}}^U(\boldsymbol{\theta}+\boldsymbol{\epsilon})}{\partial \epsilon_i \partial \epsilon_j}\Big{|}_{\boldsymbol{\epsilon}=0}\Bigg]
\label{eq:qfisher_elements2}
\end{equation}
If we focus at the second term of Eq \eqref{eq:qfisher_elements2}, we notice that:
\begin{equation}
\begin{gathered}
    \sum_{\substack{\boldsymbol{s},\boldsymbol{s'} \\ \boldsymbol{s} \neq \boldsymbol{s'}}}p_{\boldsymbol{s}}^U(\boldsymbol{\theta}) \frac{\partial^2 p_{\boldsymbol{s'}}^U(\boldsymbol{\theta}+\boldsymbol{\epsilon})}{\partial \epsilon_i \partial \epsilon_j} = \sum_{\boldsymbol{s}} p_{\boldsymbol{s}}^U(\boldsymbol{\theta}) \sum_{\boldsymbol{s'}\neq \boldsymbol{s}} \frac{\partial^2 p_{\boldsymbol{s'}}^U(\boldsymbol{\theta}+\boldsymbol{\epsilon})}{\partial \epsilon_i \partial \epsilon_j}\\
    \sum_{\boldsymbol{s}} p_{\boldsymbol{s}}^U(\boldsymbol{\theta}) \Bigg(\sum_{\boldsymbol{s'}}\frac{\partial^2 p_{\boldsymbol{s'}}^U(\boldsymbol{\theta}+\boldsymbol{\epsilon})}{\partial \epsilon_i \partial \epsilon_j} - \frac{\partial^2 p_{\boldsymbol{s}}^U(\boldsymbol{\theta}+\boldsymbol{\epsilon})}{\partial \epsilon_i \partial \epsilon_j}\Bigg) = -  \sum_{\boldsymbol{s}} p_{\boldsymbol{s}}^U(\boldsymbol{\theta}) \frac{\partial^2 p_{\boldsymbol{s}}^U(\boldsymbol{\theta}+\boldsymbol{\epsilon})}{\partial \epsilon_i \partial \epsilon_j}
\end{gathered}    
\end{equation}
where we used the fact that:
\begin{equation}
    \sum_{\boldsymbol{s'}} \frac{\partial^2 p_{\boldsymbol{s'}}^U(\boldsymbol{\theta}+\boldsymbol{\epsilon})}{\partial \epsilon_i \partial \epsilon_j} = \frac{\partial^2}{\partial\epsilon_i \partial\epsilon_j} \sum_{\boldsymbol{s'}}p_{\boldsymbol{s'}}^U(\boldsymbol{\theta}+\boldsymbol{\epsilon}) =0
\label{eq:partial_zero}
\end{equation}
Thus, the QFIM elements can be expressed as:
\begin{equation}
    [\mathcal{F}_Q(\boldsymbol{\theta})]_{ij} =  -(2^{n+1} + 2) \sum_{\boldsymbol{s}} \mathbb{E}_{U\sim \mu_H}\Bigg[p_{\boldsymbol{s}}^U(\boldsymbol{\theta})  \frac{\partial^2 p_{\boldsymbol{s}}^U(\boldsymbol{\theta}+\boldsymbol{\epsilon})}{\partial \epsilon_i \partial \epsilon_j}\Big{|}_{\boldsymbol{\epsilon}=0}\Bigg] = -(2^{n+1} + 2) \sum_{\boldsymbol{s}} \mathbb{E}_{U\sim \mu_H}\Bigg[p_{\boldsymbol{s}}^U(\boldsymbol{\theta})  \frac{\partial^2 p_{\boldsymbol{s}}^U(\boldsymbol{\theta})}{\partial \theta_i \partial \theta_j}\Bigg]
\label{eq:qfisher_elements3}
\end{equation}

The calcuation of the quantum Fisher information using Eq. \eqref{eq:qfim_elements} is impractical. The reason is that it requires the calculation of the Hessian of the outcome probabilities, which in general requires $O(m^2)$ quantum states to estimate it. However, we can calculate that:
\begin{equation}
\begin{gathered}
    \sum_{\boldsymbol{s}} \mathbb{E}_{U\sim \mu_H}\Bigg[\frac{\partial p_{\boldsymbol{s}}^U(\boldsymbol{\theta})}{\partial \theta_i}\frac{\partial p_{\boldsymbol{s}}^U(\boldsymbol{\theta})}{\partial \theta_j}\Bigg] =     \sum_{\boldsymbol{s}} \mathbb{E}_{U\sim \mu_H}\Bigg[ \Tr [\frac{\partial \rho(\boldsymbol{\theta})}{\partial \theta_i} U^\dagger \Pi_{\boldsymbol{s}} U]\Tr [\frac{\partial \rho(\boldsymbol{\theta})}{\partial \theta_j} U^\dagger \Pi_{\boldsymbol{s}} U]\Bigg] \\
    =  \sum_{\boldsymbol{s}} \mathbb{E}_{U\sim \mu_H} \Bigg[ \Tr [\frac{\partial \rho(\boldsymbol{\theta})}{\partial \theta_i} \otimes \frac{\partial \rho(\boldsymbol{\theta})}{\partial \theta_j} U^{\dagger \otimes 2} \Pi_{\boldsymbol{s}}^{\otimes 2} U^{\otimes 2}]\Bigg] =  \frac{1}{2^{2n} - 1}\sum_{\boldsymbol{s}} \Tr [\frac{\partial \rho(\boldsymbol{\theta})}{\partial \theta_i} \otimes \frac{\partial \rho(\boldsymbol{\theta})}{\partial \theta_j}\Big[ \Big(1-\frac{1}{2^n}\Big)\mathbb{I} + \Big(1-\frac{1}{2^n}\Big)\mathbb{S}\Big] ] \\
    = \frac{2^n-1}{2^{3n} - 2^n} \sum_{\boldsymbol{s}} \Tr [\frac{\partial \rho(\boldsymbol{\theta})}{\partial \theta_i}] \Tr [\frac{\partial \rho(\boldsymbol{\theta})}{\partial \theta_j}] + \frac{2^n-1}{2^{3n} - 2^n} \sum_{\boldsymbol{s}} \Tr [\frac{\partial \rho(\boldsymbol{\theta})}{\partial \theta_i}\frac{\partial \rho(\boldsymbol{\theta})}{\partial \theta_j}] \\
    = \frac{2^n-1}{2^{3n} - 2^n} \sum_{\boldsymbol{s}} \Tr [\frac{\partial \rho(\boldsymbol{\theta})}{\partial \theta_i}\frac{\partial \rho(\boldsymbol{\theta})}{\partial \theta_j}] = \frac{1}{2^n + 1} \Tr [\frac{\partial \rho(\boldsymbol{\theta})}{\partial \theta_i}\frac{\partial \rho(\boldsymbol{\theta})}{\partial \theta_j}]
\label{eq:calculation_inner_product_derivatives_probabilities}
\end{gathered}
\end{equation}
where in the second line we used Eq. \eqref{eq:haar_moments} and we also used the fact that $\Tr[ \frac{\partial \rho(\boldsymbol{\theta})}{\partial \theta_i}] = \frac{\partial}{\partial \theta_i} \Tr [\rho(\boldsymbol{\theta})] = 0$ and $\Tr[A\otimes B \mathbb{S}] = \Tr[AB]$. Now, we can use the fact that:
\begin{equation}
    \frac{ \partial^2 \rho^2(\boldsymbol{\theta})}{\partial \theta_i \partial \theta_j} = 2\frac{\partial \rho(\boldsymbol{\theta})}{\partial \theta_i } \frac{\partial \rho(\boldsymbol{\theta})}{\partial \theta_j} + 2 \rho(\boldsymbol{\theta}) \frac{\partial^2 \rho(\boldsymbol{\theta})}{\partial \theta_i \partial \theta_j}
\end{equation}
Thus, substituting the above equation in Eq. \eqref{eq:calculation_inner_product_derivatives_probabilities} we get:
\begin{equation}
    \sum_{\boldsymbol{s}} \mathbb{E}_{U\sim \mu_H}\Bigg[\frac{\partial p_{\boldsymbol{s}}^U(\boldsymbol{\theta})}{\partial \theta_i}\frac{\partial p_{\boldsymbol{s}}^U(\boldsymbol{\theta})}{\partial \theta_j}\Bigg] = \frac{1}{2(2^n+1)} \Tr [\frac{\partial^2 \rho^2(\boldsymbol{\theta})}{\partial \theta_i \partial \theta_j}] - \frac{1}{(2^n+1)} \Tr [\rho(\boldsymbol{\theta}) \frac{\partial^2 \rho(\boldsymbol{\theta})}{\partial \theta_i \partial \theta_j}] = \frac{[\mathcal{F}_Q]_{ij}}{2(2^n+1)}
\end{equation}
where we used the fact that $\Tr[ \frac{\partial^2 \rho(\boldsymbol{\theta})}{\partial \theta_i \partial \theta_j}] = \frac{\partial^2}{\partial \theta_i \partial \theta_j} \Tr [\rho(\boldsymbol{\theta})] = 0$. As such, we were able to prove that the matrix elements of the quantum Fisher information matrix can be written as product of first-order derivatives:
\begin{equation}
    [\mathcal{F}_Q(\boldsymbol{\theta})]_{ij} = 2(2^n + 1)    \sum_{\boldsymbol{s}} \mathbb{E}_{U\sim \mu_H}\Bigg[\frac{\partial p_{\boldsymbol{s}}^U(\boldsymbol{\theta})}{\partial \theta_i}\frac{\partial p_{\boldsymbol{s}}^U(\boldsymbol{\theta})}{\partial \theta_j}\Bigg]
\end{equation}
As a result, we proved that the quantum Fisher information matrix can be approximated as the average (over the Haar distribution) of a quantity that requires $\mathcal{O}(m)$ quantum states and $O(m^2)$ classical memory to store the matrix. As we see in our numerical experiments in Fig. \ref{fig:distance from QFIM} we can achieve a very good approximation of the quantum Fisher information with a small number of repetitions (usually much less than $m$ repetitions). 

\begin{defi}
    \emph{(Unitary $k$-design) Let $\nu$ be a probability distribution defined over a set of unitaries $S\subseteq U(d)$. The distribution $\nu$ is unitary $k$-design if and only if:}
    \begin{equation}
       \mathbb{E}_{V\sim \nu}[V^{\otimes k}O V^{\dagger \otimes k}] =  \mathbb{E}_{U\sim \mu_H} [U^{\otimes k}O U^{\dagger \otimes k}]   
    \end{equation}
    \emph{for all $O\in \mathcal{L}((\mathbb{C}^d)^{\otimes k})$}.
\end{defi}

In general, generating Haar random unitaries on a quantum computer is a computationally exhaustive task, since most unitary operators require a number of gates that scale exponentially to the number of qubits \cite{mele2024introduction}. On the other hand, $k$-designs are distributions that match the Haar moments up to the $k$-th order (see Definition \ref{defi:k-design}). The advantage is that $k$-designs can be generated efficiently. As a result, we provide the following Corollary.

\begin{corollary}
    
    If $U$ is sampled from a 2-design $\nu$, then the matrix elements of the quantum Fisher information matrix can be calculated as:
      \begin{equation}
        [\mathcal{F}_Q(\boldsymbol{\theta})]_{ij} = 2(2^n + 1)    \sum_{\boldsymbol{s}} \mathbb{E}_{U\sim \nu}\Bigg[\frac{\partial p_{\boldsymbol{s}}^U(\boldsymbol{\theta})}{\partial \theta_i}\frac{\partial p_{\boldsymbol{s}}^U(\boldsymbol{\theta})}{\partial \theta_j}\Bigg]
    \end{equation}
\end{corollary}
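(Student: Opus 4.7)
The plan is to reduce the corollary to Theorem \ref{th:main_theorem} by observing that the summand on the right-hand side, viewed as a function of $U$, depends on $U$ only through the second tensor power $U^{\otimes 2}$. Once this is made explicit, the definition of a 2-design (Definition \ref{defi:k-design}) immediately lets us swap the Haar expectation for a 2-design expectation, without touching any of the analysis that leads to the Haar identity in the theorem.

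Concretely, I would first rewrite the summand as a trace involving the second moment operator. Using $p_{\boldsymbol{s}}^U(\boldsymbol{\theta})=\Tr[U\rho(\boldsymbol{\theta})U^{\dagger}\Pi_{\boldsymbol{s}}]$ and the standard identity $\Tr[A]\Tr[B]=\Tr[(A\otimes B)]$, the product of the partial derivatives becomes
\begin{equation*}
\frac{\partial p_{\boldsymbol{s}}^U(\boldsymbol{\theta})}{\partial \theta_i}\frac{\partial p_{\boldsymbol{s}}^U(\boldsymbol{\theta})}{\partial \theta_j}
= \Tr\!\left[U^{\otimes 2}\Bigl(\tfrac{\partial \rho(\boldsymbol{\theta})}{\partial \theta_i}\otimes\tfrac{\partial \rho(\boldsymbol{\theta})}{\partial \theta_j}\Bigr) U^{\dagger\otimes 2}\,\Pi_{\boldsymbol{s}}^{\otimes 2}\right],
\end{equation*}
so that after summing over $\boldsymbol{s}$ and taking expectations the only $U$-dependence sits inside a second moment operator acting on a fixed (albeit $\boldsymbol{\theta}$-dependent) operator on $(\mathbb{C}^{2^n})^{\otimes 2}$.

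At this point I invoke Definition \ref{defi:k-design} with $k=2$: for any $O\in\mathcal{L}((\mathbb{C}^{2^n})^{\otimes 2})$ one has $\mathbb{E}_{U\sim\mu_H}[U^{\otimes 2}OU^{\dagger\otimes 2}]=\mathbb{E}_{U\sim\nu}[U^{\otimes 2}OU^{\dagger\otimes 2}]$. Applying this with $O=\tfrac{\partial\rho(\boldsymbol{\theta})}{\partial\theta_i}\otimes\tfrac{\partial\rho(\boldsymbol{\theta})}{\partial\theta_j}$ (equivalently, on the projector side, with $O=\Pi_{\boldsymbol{s}}^{\otimes 2}$) gives
\begin{equation*}
\mathbb{E}_{U\sim\mu_H}\!\left[\frac{\partial p_{\boldsymbol{s}}^U(\boldsymbol{\theta})}{\partial \theta_i}\frac{\partial p_{\boldsymbol{s}}^U(\boldsymbol{\theta})}{\partial \theta_j}\right]=\mathbb{E}_{U\sim\nu}\!\left[\frac{\partial p_{\boldsymbol{s}}^U(\boldsymbol{\theta})}{\partial \theta_i}\frac{\partial p_{\boldsymbol{s}}^U(\boldsymbol{\theta})}{\partial \theta_j}\right].
\end{equation*}
Substituting this identity into the Haar expression from Theorem \ref{th:main_theorem} yields the claim.

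There is no real obstacle here beyond bookkeeping, since the heavy lifting, namely computing the Haar second moment and reorganising derivatives of probabilities into derivatives of $\rho$, is already done in the theorem. The only point that needs a line of care is to justify the interchange of differentiation in $\boldsymbol{\theta}$ with the $U$-expectation, which is immediate because differentiation only hits the $\boldsymbol{\theta}$-dependent factors $\rho(\boldsymbol{\theta})$ (the unitary $U$ and the projectors $\Pi_{\boldsymbol{s}}$ are independent of $\boldsymbol{\theta}$), so the expressions above are polynomial in the entries of $U^{\otimes 2}$ with $\boldsymbol{\theta}$-dependent coefficients and the interchange is harmless.
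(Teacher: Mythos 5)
Your argument is correct and is essentially identical to the paper's own proof: both rewrite $\sum_{\boldsymbol{s}}\mathbb{E}_{U}\bigl[\partial_{\theta_i}p_{\boldsymbol{s}}^U\,\partial_{\theta_j}p_{\boldsymbol{s}}^U\bigr]$ as a trace against the second moment operator $\mathbb{E}_{U}[U^{\otimes 2}(\cdot)U^{\dagger\otimes 2}]$ and then invoke Definition \ref{defi:k-design} with $k=2$ to replace the Haar average by the 2-design average in Theorem \ref{th:main_theorem}. The only cosmetic difference is that you place the twirl on $\partial_{\theta_i}\rho\otimes\partial_{\theta_j}\rho$ while the paper places it on $\Pi_{\boldsymbol{s}}^{\otimes 2}$, which you already note are equivalent by cyclicity of the trace.
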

\begin{proof}
    We can express the quantity:
    \begin{equation*}
        \mathbb{E}_{U\sim \mu_H}\Bigg[\frac{\partial p_{\boldsymbol{s}}^U(\boldsymbol{\theta})}{\partial \theta_i}\frac{\partial p_{\boldsymbol{s}}^U(\boldsymbol{\theta})}{\partial \theta_j}\Bigg]
    \end{equation*}
    as:
    \begin{equation}
    \begin{gathered}
        \mathbb{E}_{U\sim \mu_H}\Bigg[\frac{\partial p_{\boldsymbol{s}}^U(\boldsymbol{\theta})}{\partial \theta_i}\frac{\partial p_{\boldsymbol{s}}^U(\boldsymbol{\theta})}{\partial \theta_j}\Bigg] = \mathbb{E}_{U\sim \mu_H}\Bigg[\Tr[U \Pi_{\boldsymbol{s}} U^\dagger \frac{\partial \rho(\boldsymbol{\theta})}{\partial \theta_i}] \Tr[U \Pi_{\boldsymbol{s}} U^\dagger \frac{\partial \rho(\boldsymbol{\theta})}{\partial \theta_j}] \Bigg] \\=   \Tr \Big[\mathbb{E}_{U\sim \mu_H}[U^{\otimes 2} \Pi_{\boldsymbol{s}}^{\otimes 2} U^{\dagger \otimes 2}] \frac{\partial \rho(\boldsymbol{\theta})}{\partial \theta_i} \otimes \frac{\partial \rho(\boldsymbol{\theta})}{\partial \theta_j}\Big]
    \end{gathered}
    \end{equation}
    Thus, using Definition \ref{defi:k-design} for the unitary $k$-designs (for $O=\Pi_s ^{\otimes 2}$) we can conclude that if $U$ comes from a 2-design then the proposition holds.
\end{proof}

As a result, the quantum Fisher information can be estimated by sampling unitaries that come from a $k$-design with $k\geq 2$ (since any $k$-design is also a 2-design if $k\geq 2$). An example of such ensembles is the $n$-qubit Clifford group $Cl(n)$ which forms a 3-design. The Clifford group is defined as:
\begin{equation}
    Cl(n) := \{U \in U(2^n)| \; \: UPU^\dagger \in \mathcal{P}_n \; \text{for all } P\in \mathcal{P}_n \}
\end{equation}
where $\mathcal{P}_n$ is the Pauli group. Elements from the $n$-qubit Clifford group can be generated by a circuit with at most $\mathcal{O}(n^2/\log n)$ elementary gates \cite{aaronson2004improved}.\\

At the same time, the classical Fisher information matrix (when the parameterized quantum state is rotated by a global random unitary $U$ and then measured in the computational basis) can be expressed as:
\begin{equation}
    [\mathcal{F}_C^U(\boldsymbol{\theta})]_{ij} = -\sum_{\boldsymbol{s}} p_{\boldsymbol{s}}^U(\boldsymbol{\theta})\frac{\partial^2\ln[p_{\boldsymbol{s}}^U(\boldsymbol{\theta})]}{\partial \theta_i \partial \theta_j} = \sum_{\boldsymbol{s}} \frac{1}{p_{\boldsymbol{s}}^U(\boldsymbol{\theta})}\frac{\partial p_{\boldsymbol{s}}^U(\boldsymbol{\theta})}{\partial \theta_i}\frac{\partial p_{\boldsymbol{s}}^U(\boldsymbol{\theta})}{\partial \theta_j} 
\end{equation}
since: 
\begin{equation}
    \begin{gathered}
        -\frac{\partial^2}{\partial \theta_i \partial \theta_j} \ln p_{\boldsymbol{s}}^U(\boldsymbol{\theta}) = -\frac{\partial}{\partial \theta_i} \Bigg[\frac{1}{p_{\boldsymbol{s}}^U(\boldsymbol{\theta})} \frac{p_{\boldsymbol{s}}^U(\boldsymbol{\theta})}{\partial \theta_j}\Bigg] =
        -\frac{1}{p_{\boldsymbol{s}}^U(\boldsymbol{\theta})}\frac{\partial^2 p_{\boldsymbol{s}}^U(\boldsymbol{\theta})}{\partial \theta_i \partial \theta_j} + \frac{1}{(p_{\boldsymbol{s}}^U(\boldsymbol{\theta})^2)}\frac{\partial p_{\boldsymbol{s}}^U(\boldsymbol{\theta})}{\partial \theta_i}\frac{\partial p_{\boldsymbol{s}}^U(\boldsymbol{\theta})}{\partial \theta_j}
    \end{gathered}
\end{equation}
and we used again Eq. \eqref{eq:partial_zero}.\\

\noindent \textbf{Conjecture:} \emph{The average classical Fisher information matrix, when the parameterized quantum state $\ket{\phi(\boldsymbol{\theta})}$ is rotated by a random unitary $U$ and then measured in the computational basis approximates the quantum Fisher information matrix as:}
\begin{equation}
    \mathbb{E}_{U \sim \mu_H}[\mathcal{F}_C^U(\boldsymbol{\theta})] = \frac{1}{2} \mathcal{F}_Q(\boldsymbol{\theta})
\end{equation}  

One would have to show that:
\begin{equation}
    \sum_{\boldsymbol{s}} \mathbb{E}_{U\sim \mu_H}\Bigg[ p_{\boldsymbol{s}}^U(\boldsymbol{\theta})\frac{\partial^2\ln[p_{\boldsymbol{s}}^U(\boldsymbol{\theta}+ \boldsymbol{\epsilon})]}{\partial \epsilon_i \partial \epsilon_j} \Bigg{|}_{\boldsymbol{\epsilon}=0}\Bigg] = -(2^n + 1) \sum_{\boldsymbol{s}} \mathbb{E}_{U\sim \mu_H}\Bigg[p_{\boldsymbol{s}}^U(\boldsymbol{\theta})  \frac{\partial^2 p_{\boldsymbol{s}}^U(\boldsymbol{\theta}+ \boldsymbol{\epsilon})]}{\partial \epsilon_i \partial \epsilon_j} \Bigg{|}_{\boldsymbol{\epsilon}=0}\Bigg]
\end{equation}
where $p_s^U(\boldsymbol{\theta}) = \Tr [\rho(\boldsymbol{\theta}) U^\dagger \Pi_{\boldsymbol{s}} U]$.
Proving the above conjecture is a very challenging task. The main reason is that it requires the calculation of a Haar integral over the unitary group $U(2^n)$ where the unitaries rise in a non-linear way. Equivalently, it cannot be written as a $k$-moment of an operator for which ways to calculate the integrals are known (e.g. see \eqref{eq:haar_moments}). The proof of this conjecture is left for future work.

\section{Proof of Lemma \ref{lemma:descent_direction}}
\label{appendix:proof_of_lemma1}

    Consider the expectation value of the Hamiltonian $H$ of a parameterized quantum state $\ket{\phi(\boldsymbol{\theta})}$:
    \begin{equation*}
        E_{\tau}(\boldsymbol{\theta}) = \bra{\phi[\boldsymbol{\theta}(\tau)]}H\ket{\phi[\boldsymbol{\theta}(\tau)]}
    \end{equation*}
Its time derivative is then:
\begin{gather*}
    \frac{d}{d\tau}E_{\tau}(\boldsymbol{\theta}) = 2\Re \left(\bra{\phi[\boldsymbol{\theta}(\tau)]}H\frac{d\ket{\phi[\boldsymbol{\theta}(\tau)]}}{d\tau}\right) \\
    = 2\Re \left(\bra{\phi[\boldsymbol{\theta}(\tau)]}H\sum_{j=1}^m\frac{\partial\ket{\phi[\boldsymbol{\theta}(\tau)]}}{\partial\theta_j}\dot{\theta}_j\right)\\
    = \sum_{j=1}^m 2\Re \left(\bra{\phi[\boldsymbol{\theta}(\tau)]}H\frac{\partial\ket{\phi[\boldsymbol{\theta}(\tau)]}}{\partial\theta_j}\dot{\theta}_j\right) \\
    =  (\grad_{\boldsymbol{\theta}} E_\tau(\boldsymbol{\theta}))^\intercal\dot{\boldsymbol{\theta}} \\=- (\grad_{\boldsymbol{\theta}} E_\tau(\boldsymbol{\theta}))^\intercal [\mathbb{E}_{U \sim \nu}[\mathcal{F}_C^U(\boldsymbol{\theta}(\tau)]]^{-1} \grad_{\boldsymbol{\theta}} E_\tau(\boldsymbol{\theta})
    \end{gather*}
Since any classical Fisher information matrix is a positive semi-definite matrix, their average will also be positive semidefinite:
\begin{equation}
    \mathbb{E}_{U \sim \nu}[\mathcal{F}_C^U(\boldsymbol{\theta}(\tau)] \succcurlyeq 0
\end{equation}
which implies that its inverse is also positive semidefinite. As a result,
\begin{equation}
     \frac{d}{d\tau}E_{\tau}(\boldsymbol{\theta}) \leq 0
\end{equation}
and so we move into a \emph{a descent direction}. 

\section{Proof of Lemma \ref{lemma:error_in_approximation}}
\label{appendix:error_in_approximation}

Let the quantum Fisher information matrix $\mathcal{F}_Q$ and its estimator $\tilde{\mathcal{F}}_Q$ be non-singular with their eigenvalues satisfying:
    \begin{gather*}
        \lambda_1(\mathcal{F}_Q) \geq \lambda_2(\mathcal{F}_Q) \geq \ldots \geq \lambda_m(\mathcal{F}_Q) > 0 \\
        \lambda_1(\tilde{\mathcal{F}}_Q) \geq \lambda_2(\tilde{\mathcal{F}}_Q) \geq \ldots \geq \lambda_m(\tilde{\mathcal{F}}_Q) > 0
    \end{gather*}
    Consider the two different linear systems in Eqs. \eqref{eq:diff_eq_imaginary}, \eqref{eq:rmite_update} with their corresponding solutions $\dot{\boldsymbol{\theta}}_Q$ and $\dot{\tilde{\boldsymbol{\theta}}}_Q$ respectively. We have that:
    \begin{equation*}
    \begin{gathered}
        \norm{\dot{\boldsymbol{\theta}}_Q - \dot{\tilde{\boldsymbol{\theta}}}_Q} = \norm{\big(\tilde{\mathcal{F}}_Q^{-1} - \mathcal{F}_Q^{-1}\big)\grad_{\boldsymbol{\theta}} E_\tau} \\ \leq \norm{\tilde{\mathcal{F}}_Q^{-1} - \mathcal{F}_Q^{-1}} \norm{\grad_{\boldsymbol{\theta}} E_\tau} \leq \norm{\tilde{\mathcal{F}}_Q^{-1} - \mathcal{F}_Q^{-1}}\norm{\mathcal{F}_Q}\norm{\dot{\boldsymbol{\theta}}_Q}
    \end{gathered}
    \end{equation*}
where we used the fact that $\norm{\grad_{\boldsymbol{\theta}} E_\tau} \leq \norm{\mathcal{F}_Q}\norm{\dot{\boldsymbol{\theta}}_Q}$. As such, the relative error is upper bounded as:
\begin{equation}
    \frac{\norm{\dot{\boldsymbol{\theta}}_Q- \dot{\tilde{\boldsymbol{\theta}}}_Q}}{\norm{\dot{\boldsymbol{\theta}_Q}}} \leq \norm{\tilde{\mathcal{F}}_Q^{-1} - \mathcal{F}_Q^{-1}}\norm{\mathcal{F}_Q}
\label{eq:relative_error_bound1}
\end{equation}

%\emph{Option 1.} If we use triangular inequality in Eq. \eqref{eq:relative_error_bound1} then we have that:
%\begin{gather*}
    %\frac{\norm{\dot{\boldsymbol{\theta}}_Q - \dot{\boldsymbol{\theta}}_C}}{\norm{\dot{\boldsymbol{\theta}_Q}}} \leq \norm{(\mathcal{F}_C^{-1} - \mathcal{F}_Q^{-1})}\norm{\mathcal{F}_Q} \\ \leq 
        %\Big(\norm{\mathcal{F}_C^{-1}} + \norm{\mathcal{F}_Q^{-1}}\Big)\norm{\mathcal{F}_Q} \\ = \Big(\frac{1}{\lambda_{\min}(\mathcal{F}_C)} + \frac{1}{\lambda_{\min}(\mathcal{F}_Q)}\Big)\lambda_{\max}(\mathcal{F}_Q)\\
        %= \frac{\lambda_{\max}(\mathcal{F}_Q)}{\lambda_{\min}(\mathcal{F}_C)} + \kappa(\mathcal{F}_Q)
%\end{gather*}
%where $\kappa(\mathcal{F}_Q)$ is the condition number of the quantum Fisher information matrix.\\

 %We know that the quantum Fisher information upper bounds any classical Fisher information matrix as $\mathcal{F}_C^U \preccurlyeq \mathcal{F}_Q$ \cite{meyer2021fisher, liu2020quantum}. 
 If we consider the case where estimator differs from the quantum Fisher information matrix by a small matrix $\Delta$ (with $\norm{\Delta}\leq \epsilon$), then we have:
\begin{gather*}
    \mathcal{F}_Q = \tilde{\mathcal{F}}_Q + \Delta \implies \\
    \tilde{\mathcal{F}}_Q^{-1} = \mathcal{F}_Q^{-1} + \tilde{\mathcal{F}}_Q^{-1}(\mathcal{F}_Q - \tilde{\mathcal{F}}_Q)\mathcal{F}_Q^{-1}
\end{gather*}
where:
\begin{gather*}
    \norm{\tilde{\mathcal{F}}_Q^{-1}(\mathcal{F}_Q - \tilde{\mathcal{F}}_Q)\tilde{\mathcal{F}}_Q^{-1}} \leq \norm{\tilde{\mathcal{F}}_Q^{-1}}\norm{\mathcal{F}_Q - \tilde{\mathcal{F}}_Q}\norm{\mathcal{F}_Q^{-1}}\\
    \leq \frac{\epsilon}{\lambda_{\min}(\tilde{\mathcal{F}}_Q) \lambda_{\min}(\tilde{\mathcal{F}}_Q)} = \frac{\epsilon}{\lambda_m(\tilde{\mathcal{F}}_Q)\lambda_m(\mathcal{F}_Q)}
\end{gather*}
If we assume that the matrix $\tilde{\mathcal{F}}_Q^{-1}(\mathcal{F}_Q - \tilde{\mathcal{F}}_Q)\mathcal{F}_Q^{-1}$ is also small then we can use the dual Weyl's inequality and have that:
\begin{equation}
    \begin{gathered}
        \lambda_1(\tilde{\mathcal{F}}_Q^{-1}) \geq \lambda_m(\mathcal{F}_Q^{-1}) + \lambda_1(\tilde{\mathcal{F}}_Q^{-1} -\mathcal{F}_Q^{-1}) \implies \\
        \lambda_1(\tilde{\mathcal{F}}_Q^{-1} -\mathcal{F}_Q^{-1}) \leq \lambda_1(\tilde{\mathcal{F}}_Q^{-1})  - \lambda_m(\mathcal{F}_Q^{-1}) \implies \\
        \lambda_1(\tilde{\mathcal{F}}_Q^{-1}-\mathcal{F}_Q^{-1}) \leq \frac{1}{\lambda_m(\tilde{\mathcal{F}}_Q)} - \frac{1}{\lambda_1(\mathcal{F}_Q)}
    \end{gathered}
\end{equation}
In that case, putting everything back in Eq. \eqref{eq:relative_error_bound1}, the relative error can be upper bounded as:
\begin{equation}
    \frac{\norm{\dot{\boldsymbol{\theta}}_Q - \dot{\tilde{\boldsymbol{\theta}}}_Q}}{\norm{\dot{\boldsymbol{\theta}}_Q}} \leq \frac{\lambda_1(\mathcal{F}_Q)}{\lambda_m(\tilde{\mathcal{F}}_Q)} - 1
\end{equation}

\section{Quantum Resources of Estimators}
\label{appendix:quantum_resources}

It is valuable to understand how the proposed estimators in Eqs. \eqref{eq:quantum_from_random_measurements_2_design} and \eqref{eq:cfisher_haar} scale with the number of parameters. It is true that the number of quantum resources needed to calculate the quantum Fisher information matrix scale quadratically with the number of parameters. As such, our estimators can be considered useful only if we can get a good approximation of the QFIM with quantum resources less than that required to calculate the QFIM.

What can be considered a good approximation is user-dependent and can be inferred by the distance of the estimator from the exact matrix. Let $\tilde{\mathcal{F}}_Q$ be an estimator of the QFIM, and let $\mathcal{F}_Q$ be the exact QFIM. The error is then evaluated as:
\begin{equation}
    \norm{\tilde{\mathcal{F}}_Q - \mathcal{F}_Q}
\end{equation}
where we choose $\norm{\cdot}$ to be the $l_2$ norm. In this paper, we tested two values for the desired threshold. That is, we set target error to $\epsilon = 0.1$ and $\epsilon=0.2$ and aimed to quantify how much quantum resources are needed as the number of parameters increases, in order to achieve an error $\norm{\tilde{\mathcal{F}}_Q - \mathcal{F}_Q} \leq \epsilon$. To do this, we employed the ansatz family depicted in Figure \ref{fig:hardware_efficient_ansatz}. If $l$ is the number of layers of the ansatz family, its parameters increase as $m =2(l+1)n$, where $n$ is the number of qubits. For our experiments, we employed both 10, 11 and 12 qubit instances with $l\in [3,4,5,6]$. For each instance, we prepared a random state $\ket{\psi(\boldsymbol{\theta})}$ and used the estimators \eqref{eq:quantum_from_random_measurements_2_design} and \eqref{eq:cfisher_haar} to estimate the number of samples needed to achieve error less than $\epsilon$. For each layer and qubit choice, we sampled 10 random instances and then calculated the average number of quantum states needed to achieve the desired error. Our results are illustrated in Figure \ref{fig:quantum_resources_different_layers}.

As it is clearly visualized, the average classical Fisher information estimator in Eq. \eqref{eq:cfisher_haar} is able to approximate with much fewer resources the QFIM. For every choice of error tolerance, number of qubits and number of parameters, the estimator requires fewer quantum resources than the exact QFIM. However, the same is not true for the 2-design estimator. The latter achieves a good approximation to the QFIM slower than the average Fisher estimator. As is illustrated, the 2-design estimator becomes useful in the large parameter regime, which is ideal since the calculation of the QFIM for a small number of parameters is not computationally expensive. We also observe that if we allow for a slightly larger error (i.e. $\epsilon = 0.2$), then both estimators achieve a very fast convergence. However, it is important to also mention that there is a critical error threshold $\epsilon_c$ after which, calculating the full QFIM requires less quantum states than any of the 2 estimators.

\begin{figure*}
\begin{tikzpicture}
\node (img1)  {\includegraphics[scale=0.35]{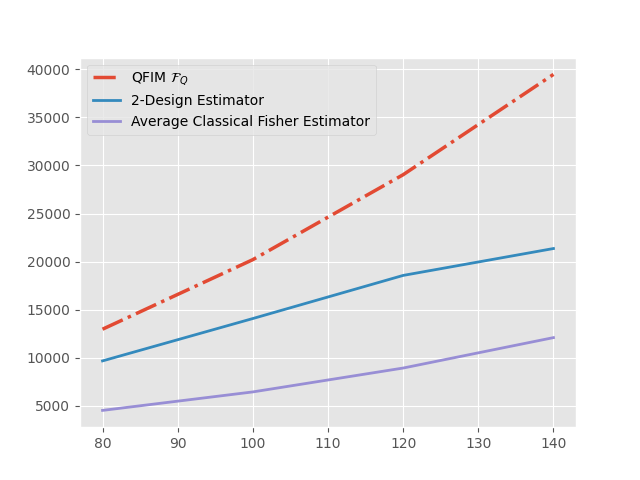}};
\node[below=of img1, node distance=0cm, yshift=1.1cm] {\scriptsize Number of Parameters};
\node[left=of img1, node distance=0cm, rotate=90, anchor=center,yshift=-1cm] {\scriptsize Quantum State Preparations};
\node[right=of img1, xshift=-1cm] (img2)  {\includegraphics[scale=0.35]{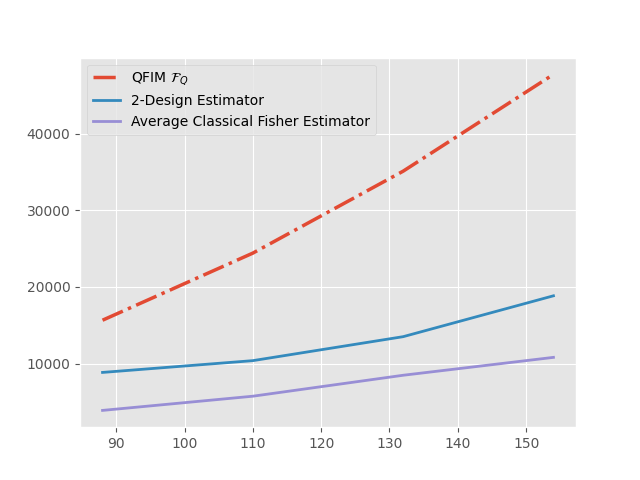}};
\node[above=of img1, node distance=0cm, yshift=-1.5cm] {\scriptsize $\epsilon=0.2$, 10 Qubits};
\node[below=of img2, node distance=0cm, yshift=1.1cm] {\scriptsize Number of Parameters};
\node[left=of img2, node distance=0cm, rotate=90, anchor=center,yshift=-1cm] {\scriptsize Quantum State Preparations};
\node[above=of img2, node distance=0cm, yshift=-1.5cm] {\scriptsize $\epsilon=0.2$, 11 Qubits};
\node[below=of img1] (img3) {\includegraphics[scale=0.35]{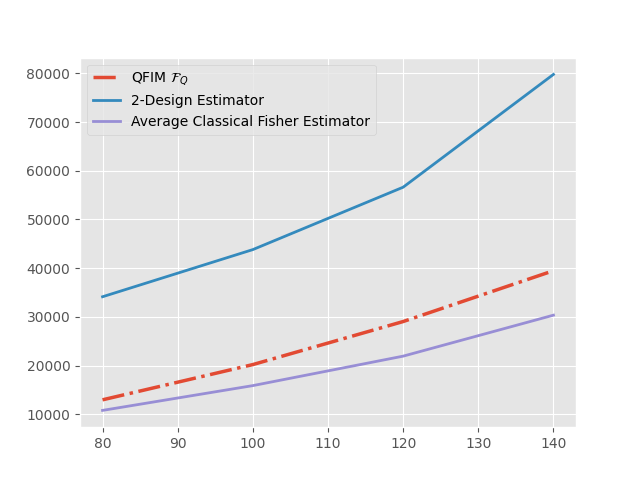}};
\node[below=of img3, node distance=0cm, yshift=1.1cm]{\scriptsize Number of Parameters};
\node[left=of img3, node distance=0cm, rotate=90, anchor=center, yshift=-1cm] {\scriptsize Quantum State Preparations};
\node[right=of img3, xshift=-1cm] (img4)  {\includegraphics[scale=0.35]{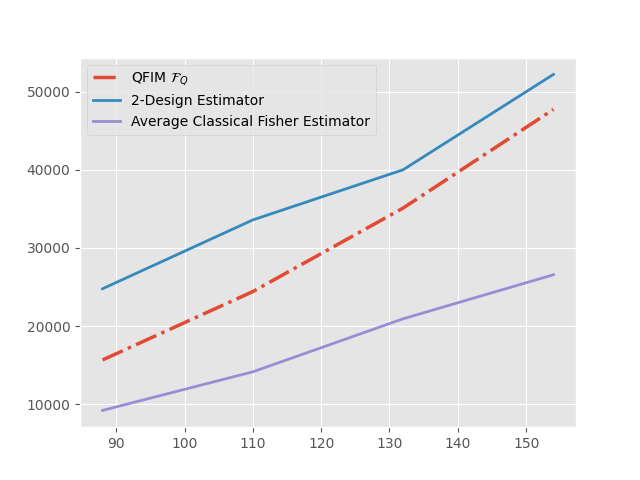}};
\node[above=of img3, node distance=0cm, yshift=-1.5cm] {\scriptsize $\epsilon=0.1$, 10 Qubits};
\node[below=of img4, node distance=0cm, yshift=1.1cm] {\scriptsize Number of Parameters};
\node[left=of img4, node distance=0cm, rotate=90, anchor=center,yshift=-1cm] {\scriptsize Quantum State Preparations};
\node[right=of img2, xshift=-1cm] (img5)  {\includegraphics[scale=0.35]{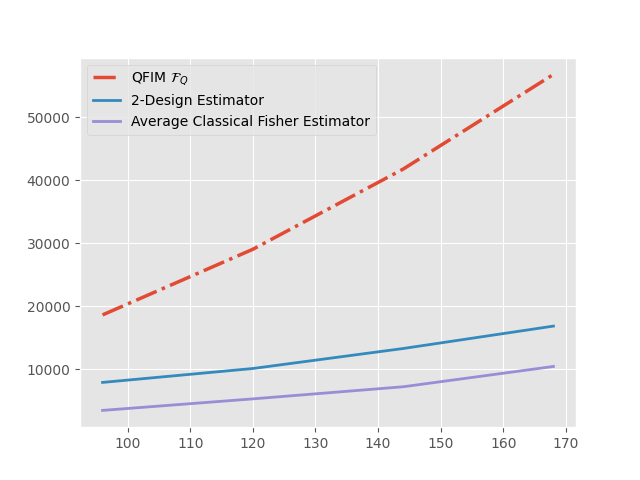}};
\node[above=of img4, node distance=0cm, yshift=-1.5cm] {\scriptsize $\epsilon=0.1$, 11 Qubits};
\node[below=of img5, node distance=0cm, yshift=1.1cm] {\scriptsize Number of Parameters};
\node[left=of img5, node distance=0cm, rotate=90, anchor=center,yshift=-1cm] {\scriptsize Quantum State Preparations};
\node[above=of img5, node distance=0cm, yshift=-1.5cm] {\scriptsize $\epsilon=0.2$, 12 Qubits};
\node[right=of img4, xshift=-1cm] (img6)  {\includegraphics[scale=0.35]{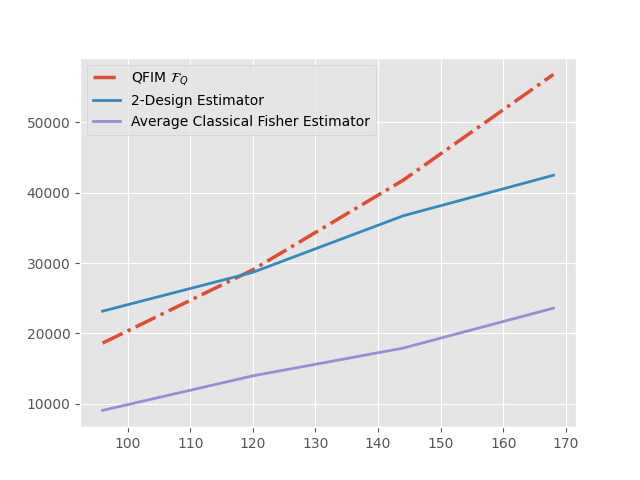}};
\node[below=of img6, node distance=0cm, yshift=1.1cm] {\scriptsize Number of Parameters};
\node[left=of img6, node distance=0cm, rotate=90, anchor=center,yshift=-1cm] {\scriptsize Quantum State Preparations};
\node[above=of img6, node distance=0cm, yshift=-1.5cm] {\scriptsize $\epsilon=0.1$, 12 Qubits};
\end{tikzpicture}
\caption{Comparison of quantum resources (different quantum state preparations) for the 2 estimators in Eqs \eqref{eq:quantum_from_random_measurements_2_design} and \eqref{eq:cfisher_haar}, for different number of parameters and target error approximations. The average classical Fisher estimator is able to outperform the 2-design estimator, approximating the QFIM with high accuracy and requiring significantly fewer quantum calls.}
\label{fig:quantum_resources_different_layers}
\end{figure*}

\end{document}